\def\ps@pprintTitle{%
 \let\@oddhead\@empty
 \let\@evenhead\@empty
 \def\@oddfoot{}%
 \let\@evenfoot\@oddfoot}
\begin{document}
\begin{frontmatter}

\title{A proof of the Multiplicative 1-2-3 Conjecture}

\author[nice]{Julien Bensmail}
\author[labri]{Herv\'e Hocquard}
\author[labri]{Dimitri Lajou}
\author[labri]{\'Eric Sopena}

\address[nice]{Universit\'e C\^ote d'Azur, CNRS, Inria, I3S, France}
\address[labri]{Univ. Bordeaux, CNRS,  Bordeaux INP, LaBRI, UMR 5800, F-33400, Talence, France}

\journal{...}

\begin{abstract}
We prove that the product version of the 1-2-3 Conjecture, raised by Skowronek-Kazi\'ow in 2012, is true. Namely, for every connected graph with order at least~$3$, we prove that we can assign labels~$1,2,3$ to the edges in such a way that no two adjacent vertices are incident to the same product of labels.
\end{abstract}

\begin{keyword} 
1-2-3 Conjecture; product version; labels~$1,2,3$.
\end{keyword}
 
\end{frontmatter}

\newtheorem{theorem}{Theorem}[section]
\newtheorem{lemma}[theorem]{Lemma}
\newtheorem{conjecture}[theorem]{Conjecture}
\newtheorem{observation}[theorem]{Observation}
\newtheorem{claim}[theorem]{Claim}
\newtheorem{corollary}[theorem]{Corollary}
\newtheorem{proposition}[theorem]{Proposition}
\newtheorem{question}[theorem]{Question}
\newtheorem{definition}[theorem]{Definition}
\newtheorem*{123c}{1-2-3 Conjecture (sum version)}
\newtheorem*{m123c}{1-2-3 Conjecture (multiset version)}
\newtheorem*{p123c}{1-2-3 Conjecture (product version)}

\newcommand{\qedclaim}{\hfill $\diamond$ \medskip}
\newenvironment{proofclaim}{\noindent{\em Proof of the claim.}}{\qedclaim}

\newcommand{\chis}{\chi_{\rm S}}
\newcommand{\chim}{\chi_{\rm M}}
\newcommand{\chip}{\chi_{\rm P}}
\newcommand{\set}[1]{\left\{#1\right\}}
\newcommand{\abs}[1]{\left|#1\right|}
\newcommand{\paren}[1]{\left(#1\right)}
\newcommand{\CombNull}{Combinatorial Nullstellensatz}


\section{Introduction}

Let $G$ be a graph.
A \textit{$k$-labelling} $\ell: E(G) \rightarrow \{1,\dots,k\}$ is an assignment of labels~$1,\dots,k$ to the edges of $G$. From $\ell$, we can compute different parameters of interest for all vertices $v$, such as the \textit{sum} $\sigma_\ell(v)$ of incident labels (being formally $\sigma_\ell(v)=\Sigma_{u \in N(v)} \ell(uv)$), or similarly the \textit{multiset} $\mu_\ell(v)$ of labels incident to $v$ or the \textit{product} $\rho_\ell(v)$ of labels incident to $v$.
We say that $\ell$ is \textit{s-proper} if $\sigma_\ell$ is a proper vertex-colouring of $G$, \textit{i.e.}, we have $\sigma_\ell(u) \neq \sigma_\ell(v)$ for every edge $uv \in E(G)$. Similarly, we say that $\ell$ is \textit{m-proper} and \textit{p-proper}, if $\mu_\ell$ and $\rho_\ell$, respectively, form proper vertex-colourings of $G$.

In the context of so-called \textit{distinguishing labellings}, the goal is generally to not only distinguish vertices within some distance according to some parameter computed from labellings (such as the parameters $\sigma_\ell$, $\mu_\ell$ and $\rho_\ell$ above, to name a few), but also to construct such $k
$-labellings with $k$ being as small as possible. We refer the interested reader to~\cite{Gal98}, in which hundreds of such labelling techniques are listed. 

Regarding s-proper, m-proper and p-proper labellings, which are the main focus in this work, we are thus interested, as mentioned above, in finding such $k$-labellings with $k$ as small as possible, for a given graph $G$. In other words, we are interested in the parameters $\chis(G)$, $\chim(G)$ and $\chip(G)$ which denote the smallest $k \geq 1$ such that s-proper, m-proper and p-proper, respectively, $k$-labellings exist (if any). Actually, through greedy labelling arguments, it can be observed that the only connected graph $G$ for which $\chis(G)$, $\chim(G)$ or $\chip(G)$ is not defined, is $K_2$, the complete graph on $2$ vertices. Consequently, these three parameters are generally investigated for so-called \textit{nice graphs}, which are those graphs with no connected component isomorphic to $K_2$.

S-proper, m-proper and p-proper labellings form a subfield of distinguishing labellings, which has been attracting attention due to the so-called 1-2-3 Conjecture, raised, in~\cite{KLT04}, by Karo\'nski, {\L}uczak and Thomason in 2004:

\begin{123c}
If $G$ is a nice graph, then $\chis(G) \leq 3$.
\end{123c}

Later on, counterparts of the 1-2-3 Conjecture were raised for m-proper and p-proper labellings. 
Addario-Berry, Aldred, Dalal and Reed first raised, in 2005, the following in~\cite{AADR05}:

\begin{m123c}
If $G$ is a nice graph, then $\chim(G) \leq 3$.
\end{m123c}

\noindent while Skowronek-Kazi\'ow then raised, in 2012, the following in~\cite{SK12}:

\begin{p123c}
If $G$ is a nice graph, then $\chip(G) \leq 3$.
\end{p123c}

It is worth mentioning that all three conjectures above, if true, 
would be tight, as attested for instance by complete graphs.
Note also that the multiset version of the 1-2-3 Conjecture is, out of the three variants above, the easiest one in a sense, as every s-proper or p-proper labelling is also m-proper (thus, proving the sum or product variant of the 1-2-3 Conjecture would prove the multiset variant).

To date, the best result towards the sum version of the 1-2-3 Conjecture, proved by Kalkowski, Karo\'nski and Pfender in~\cite{KKP10},
is that $\chis(G) \leq 5$ holds for every nice graph $G$.
Another significant result is due to Przyby{\l}o, who recently proved in~\cite{Prz19c} that even $\chis(G) \leq 4$ holds for every nice regular graph $G$.
Karo\'nski, {\L}uczak and Thomason themselves also proved in~\cite{KLT04} that $\chis(G) \leq 3$ holds for nice $3$-colourable graphs.
Regarding the multiset version, for long the best result was the one proved by Addario-Berry, Aldred, Dalal and Reed in~\cite{AADR05}, 
stating that $\chim(G) \leq 4$ holds for every nice graph $G$.
Building on that result, Skowronek-Kazi\'ow later proved in~\cite{SK12} that $\chip(G) \leq 4$ holds for every nice graph $G$.
She also proved that $\chip(G) \leq 3$ holds for every nice $3$-colourable graph $G$.

A breakthrough result was recently obtained by Vu\v{c}kovi\'c, 
as he totally proved the multiset version of the 1-2-3 Conjecture in~\cite{Vuc18}.
Due to connections between m-proper and p-proper $3$-labellings,
we observed in~\cite{BHLS20} that this result directly implies that $\chip(G) \leq 3$ holds for every nice regular graph $G$.
Inspired by Vu\v{c}kovi\'c's proof scheme, we were also able to prove that $\chip(G) \leq 3$ holds for nice $4$-chromatic graphs $G$, and to prove related results that are very close to what is stated in the product version of the 1-2-3 Conjecture. 

Building on these results, we prove the following  throughout the rest of this paper.

\begin{theorem}\label{main-theorem}
The product version of the 1-2-3 Conjecture is true.
That is, every nice graph admits p-proper $3$-labellings.
\end{theorem}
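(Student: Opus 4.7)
The plan is to argue by contradiction with a minimum counterexample, adapting Vu\v{c}kovi\'c's multiset-version scheme to the product setting. Let $G$ be a connected nice graph of minimum order satisfying $\chip(G)>3$. Since the conjecture is already settled for regular graphs and for $4$-chromatic graphs in our previous work~\cite{BHLS20}, we may assume $G$ is non-regular with $\chi(G)\geq 5$. Standard reductions on pendant vertices and cut-edges force $\delta(G)\geq 2$ and allow us to assume that $G$ is $2$-edge-connected.

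The core of the proof should be an algorithmic procedure that constructs a p-proper $3$-labelling edge by edge. We fix an ordering $v_1,\ldots,v_n$ of $V(G)$, derived from a DFS tree or from the peeling of an auxiliary proper vertex-colouring, chosen so that every $v_i$ has at least two already-processed neighbours. The vertices are then inserted one by one, maintaining the invariant that the current labelling is p-proper on the subgraph induced by the processed vertices, and that each processed vertex retains at least one ``reserve'' edge whose label may still be altered without creating conflicts at the far endpoint.

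When incorporating $v_i$, its product is determined up to adjustments on its back-edges, and three types of local moves are at our disposal: relabelling a single back-edge at $v_i$, which multiplies $\rho_\ell(v_i)$ by one of the ratios in $\{1/2,\,1/3,\,2/3,\,3/2,\,2,\,3\}$ and alters $\rho_\ell$ at exactly one back-neighbour; swapping the labels of two back-edges at $v_i$, which preserves $\rho_\ell(v_i)$ but shifts $\rho_\ell$ at the two opposite endpoints; and a two-step compensating move that uses a reserve edge at a back-neighbour to cancel the side effect of a single relabelling. Splitting $\rho_\ell(v_i)=2^a 3^b$ into the two coordinates $a$ and $b$ and comparing them with the forbidden products imposed by the at most $d(v_i)$ back-neighbours should, by a counting argument, always leave a valid combination of moves available.

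The chief obstacle will be vertices of small degree: they offer only the tiny palette of achievable products $2^a 3^b$ with $a+b\leq d(v_i)$, and they carry too few edges to afford a reserve for the compensating move. We expect that handling them requires isolating a short list of locally reducible configurations, showing that $G$ cannot contain any of them by minimality, and then deriving a contradiction with $\chi(G)\geq 5$ --- most likely by building, from the absence of these configurations, an auxiliary proper $4$-colouring of $G$ to which the $4$-chromatic result of~\cite{BHLS20} applies. That final structural reduction is where the genuine novelty compared with the earlier work should lie.
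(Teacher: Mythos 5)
There is a genuine gap: what you have written is a research plan in which every load-bearing step is deferred to an unproven claim. The crux of your scheme is the invariant that each processed vertex keeps a ``reserve'' edge that can be relabelled ``without creating conflicts at the far endpoint,'' together with the assertion that a counting argument ``should always leave a valid combination of moves available.'' This is precisely where product labellings with labels $1,2,3$ differ from the sum and multiset settings: relabelling an edge multiplies the product at \emph{both} endpoints by a fixed ratio in $\set{2,3,3/2,2/3,1/2,1/3}$, and since $2$ and $3$ are coprime, the set of products reachable from a given $2^a3^b$ by one or two such moves is very small and rigid; there is no analogue of Kalkowski's ``two candidate sums differing by a known constant.'' No counting argument is supplied, and none is obviously available. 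The concluding step is moreover internally inconsistent as stated: you propose to derive the contradiction by building a proper $4$-colouring of $G$ while assuming $\chi(G)\geq 5$. Excluding a finite list of local reducible configurations cannot force $4$-colourability (graphs of arbitrarily large girth and chromatic number exist, so chromatic number is not locally determined), so this final reduction has no plausible mechanism behind it. The preliminary reductions to $\delta(G)\geq 2$ and $2$-edge-connectedness are also not ``standard'' for p-proper labellings and would need proof.

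For comparison, the paper abandons the sequential minimum-counterexample scheme entirely. It fixes a partition $(V_1,\dots,V_t)$ of $V(G)$ into $t=\chi(G)$ independent sets chosen to minimise $\sum_k k\abs{V_k}$, which guarantees that every vertex of $V_i$ has a neighbour in every $V_j$ with $j<i$ (``upward edges''). These upward edges are then labelled so that each part $V_i$ with $i\geq 3$ receives a distinct product \emph{signature} (a prescribed $2$-degree or $3$-degree together with a prescribed parity of the $\set{2,3}$-degree), which eliminates all conflicts except between $1$-monochromatic vertices of $V_1$ and $V_2$; a final phase resolves those inside $G[V_1\cup V_2]$ using a parity-adjustment lemma on connected bipartite graphs, a reserved ``special'' product type, and one application of the \CombNull{}. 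You would need to either supply the missing counting and reducibility arguments for your scheme --- which we believe do not go through --- or switch to a global decomposition of this kind.
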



\section{Proof of Theorem~\ref{main-theorem}}

Let us start by introducing some terminology and recalling some properties of p-proper labellings, which will be used throughout the proof.
Let $G$ be a graph, and $\ell$ be a $3$-labelling of $G$.
For a vertex $v \in V(G)$ and a label $i \in \{1,2,3\}$, we denote by $d_i(v)$ the \textit{$i$-degree} of $v$ by $\ell$, being the number of edges incident to $v$ that are assigned label~$i$ by $\ell$. 
Note then that $\rho_\ell(v)=2^{d_2(v)} 3^{d_3(v)}$. 
We say that $v$ is \textit{$1$-monochromatic} if $d_2(v)=d_3(v)=0$, while we say that $v$ is \textit{$2$-monochromatic} (\textit{$3$-monochromatic}, resp.) if $d_2(v)>0$ and $d_3(v)=0$ ($d_3(v)>0$ and $d_2(v)=0$, resp.). In case $v$ has both $2$-degree and $3$-degree at least~$1$, we say that $v$ is \textit{bichromatic}. We also define the \textit{$\{2,3\}$-degree} of $v$ as the sum $d_2(v)+d_3(v)$ of its $2$-degree and its $3$-degree. Thus, if $v$ is bichromatic, then its $\{2,3\}$-degree is at least~$2$.

Because $\ell$ assigns labels $1,2,3$, and, in particular,
because $2$ and $3$ are coprime,
note that, for every edge $uv$ of $G$, we have $\rho_\ell(u) \neq \rho_\ell(v)$ as soon as $u$ and $v$ have different $2$-degrees, $3$-degrees, or $\{2,3\}$-degrees. In particular, $u$ and $v$ cannot be in conflict, \textit{i.e.}, verify $\rho_\ell(u)=\rho_\ell(v)$, if $u$ and $v$ are $i$-monochromatic and $j$-monochromatic, respectively, for $i \neq j$, or if $u$ is monochromatic while $v$ is bichromatic.

\medskip

Before going into the proof of Theorem~\ref{main-theorem}, let us start by giving an overview of it.
Let $G$ be a nice graph. Our goal is to build a p-proper $3$-labelling $\ell$ of $G$. We can clearly assume that $G$ is connected. We also set $t=\chi(G)$, where, recall, $\chi(G)$ refers to the chromatic number\footnote{Recall that a \textit{proper $k$-vertex-colouring} of a graph $G$ is a partition $(V_1, \dots, V_k$) of $V(G)$ where all $V_i$'s are independent. The \textit{chromatic number} $\chi(G)$ of $G$ is the smallest $k \geq 1$ such that proper $k$-vertex-colourings of $G$ exist. We say that $G$ is \textit{$k$-colourable} if $\chi(G) \leq k$.} of $G$. In particular, $t \geq 2$. We could even assume that $t \geq 5$, due to the product version of the 1-2-3 Conjecture being true for $4$-colourable graphs (recall~\cite{BHLS20}), though this is not needed throughout the proof.

In what follows, we construct $\ell$ through three main steps.
First, we need to partition the vertices of $G$ in a way verifying specific cut properties, forming what we call a valid partition of $V(G)$ (see later Definition~\ref{definition:valid-partition} for a more formal definition).
A \textit{valid partition} $\mathcal{V}=(V_1,\dots,V_t)$ is a partition of $V(G)$ into $t$ independent sets $V_1,\dots,V_t$ fulfilling two main properties, being, roughly put, that 1) every vertex $v$ in some part $V_i$ with $i>1$ has an incident \textit{upward edge} to every part $V_j$ with $j<i$, and 2) for every connected component of $G[V_1 \cup V_2]$ having only one edge, we can freely swap its two vertices in $V_1$ and $V_2$ while preserving the properties of a valid partition.

Once we have this valid partition $\mathcal{V}$ in hand, we can then start constructing $\ell$. The main part of the labelling process, Step~2 below, consists in starting from all edges of $G$ being assigned label~$1$ by $\ell$, and then processing the vertices of $V_3,\dots,V_t$ one after another, possibly changing the labels assigned by $\ell$ to some of their incident edges, so that certain product types are achieved by~$\rho_\ell$. These desired product types can be achieved due to the many upward edges that some vertices are incident to (in particular, the deeper a vertex lies in $\mathcal{V}$, the more upward edges it is incident to).
The product types we achieve for the vertices depend on the part $V_i$ of $\mathcal{V}$ they belong to. In particular, the modifications we make on $\ell$ guarantee that all vertices in $V_3, \dots, V_t$ are bichromatic, every two vertices in $V_i$ and $V_j$ with $i,j \in \{3,\dots,t\}$ and $i \neq j$ have distinct $2$-degrees or $3$-degrees, all vertices in $V_2$ are $1$-monochromatic or $2$-monochromatic, and all vertices in $V_1$ are $1$-monochromatic or $3$-monochromatic. By itself, achieving these product types makes $\ell$ almost p-proper, in the sense that the only possible conflicts are between $1$-monochromatic vertices in $V_1$ and $V_2$. An important point also, is that, through these label modifications, we will make sure that all edges of $G[V_1 \cup V_2]$ remain assigned label~$1$, and no vertex in $V_3 \cup \dots \cup V_t$ has $3$-degree~$1$, $2$-degree at least~$2$, and odd $\{2,3\}$-degree; in last Step~3 below, we will use that last fact to remove remaining conflicts by allowing some vertices of $V_1 \cup V_2$ to become \textit{special}, \textit{i.e.}, make their product realising these exact label conditions.

Step~3 is designed to get rid of the last conflicts between the adjacent $1$-monochromatic vertices of $V_1$ and $V_2$ without introducing new ones in $G$. To that end, we will consider the set $\mathcal{H}$ of the connected components of $G[V_1 \cup V_2]$ having conflicting vertices, and, if needed, modify the labels assigned by $\ell$ to some of their incident edges so that no conflicts remain, and no new conflicts are created in $G$. To make sure that no new conflicts are created between vertices in $V_1 \cup V_2$ and vertices in $V_3 \cup \dots \cup V_t$, we will modify labels while making sure that all vertices in $V_1 \cup V_2$ are monochromatic or special. An important point also, is that the fixing procedures we introduce require the number of edges in a connected component of $\mathcal{H}$ to be at least~$2$. Because of that, once Step~2 ends, we must make sure that $\mathcal{H}$ does not contain a connected component with only one edge incident to two $1$-monochromatic vertices. To guarantee this, we will also make sure, during Step~2, to modify labels and the partition $\mathcal{V}$ slightly so that $\mathcal{H}$ has no such configuration.

\subsection*{Step 1: Constructing a valid partition}\label{sec:123:step0}

Let $\mathcal V = (V_1, \dots, V_t)$ be a partition of $V(G)$ where each $V_i$ is an independent set. Note that such a partition exists, as, for instance, any proper $t$-vertex-colouring of $G$ forms such a partition of $V(G)$. For every vertex $u \in V_i$, an incident \textit{upward edge} (\textit{downward edge}, resp.) is an edge $uv$ for which $v$ belongs to some $V_j$ with $j<i$ ($j>i$, resp.). Note that all vertices in~$V_1$ have no incident upward edges, while all vertices in $V_t$ have no incident downward edges.

We denote by $M_0(\mathcal V)$ (also denoted $M_0$ when the context is clear) the set of isolated edges in the subgraph $G[V_1 \cup V_2]$ of $G$ induced by the vertices of $V_1 \cup V_2$.
That is, $M_0$ contains the edges of the connected components of $G[V_1 \cup V_2]$ that consist of one edge only. 
To lighten the exposition, whenever referring to the vertices of $M_0$, we mean the vertices of $G$ incident to the edges in $M_0$.

For an edge $uv \in M_0$ with $u \in V_1$ and $v \in V_2$, \textit{swapping} $uv$ consists in modifying the partition $\mathcal{V}$ by removing $u$ from $V_1$ ($v$ from $V_2$, resp.) and adding it to $V_2$ ($V_1$, resp.). In other words, we exchange the parts to which $u$ and $v$ belong. Note that if $V_1$ and $V_2$ are independent sets before the swap, then, because $uv \in M_0$, by definition the resulting new $V_1$ and $V_2$ remain independent. 
Also, the set $M_0$ is unchanged by the swap operation.

We can now give a formal definition for the notion of valid partition.

\begin{definition}[Valid partition]\label{definition:valid-partition}
For a $t$-colourable graph $G$,
a partition $\mathcal V = (V_1, \dots, V_t)$ of $V(G)$ is a \emph{valid partition} (of $G$) if $\mathcal V$ verifies the following properties.
\begin{enumerate}
    \item[$(\mathcal{I})$] Every $V_i$ is an independent set.
    \item[$(\mathcal P_1)$] Every vertex in every $V_i$ with $i \geq 2$ has a neighbour in $V_j$ for every $j <i$.
    \item[$(\mathcal{S})$] For every set $\{e_1,\dots,e_p\}$ of edges of $M_0(\mathcal V)$, successively swapping every $e_i$ (in any order) results in a partition $\mathcal V'$ verifying Properties $(\mathcal{I})$ and $(\mathcal P_1)$.
\end{enumerate}
\end{definition}

Note that Property $(\mathcal{S})$ in Definition~\ref{definition:valid-partition} implies that any valid partition $\mathcal V$ also verifies the following additional property:

\begin{enumerate}
    \item[$(\mathcal P_2)$] Successively swapping any number of edges of $M_0(\mathcal V)$ results in a valid partition $\mathcal V'$.
\end{enumerate}

In order to prove Theorem~\ref{main-theorem}, as mentioned earlier, to start constructing $\ell$ we first need to have a valid partition of $G$ in hand.
The following result guarantees its existence.

\begin{lemma}\label{lem:validpartition}
Every nice $t$-colourable graph $G$ admits a valid partition.
\end{lemma}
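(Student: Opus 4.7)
My plan is to construct the desired partition $\mathcal{V}=(V_1,\ldots,V_t)$ in two phases: a first phase that secures $(\mathcal{I})$ and $(\mathcal{P}_1)$ via a standard push-down argument, and a second phase that further refines the partition so as to secure $(\mathcal{S})$.

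For the first phase, I start from an arbitrary proper $t$-vertex-colouring of $G$, viewed as a partition $(V_1,\ldots,V_t)$ into independent sets, so that $(\mathcal{I})$ holds from the outset. While some vertex $v \in V_i$ with $i \geq 2$ has no neighbour in some part $V_j$ with $j < i$, I move $v$ from $V_i$ to $V_j$. This preserves $(\mathcal{I})$ and strictly decreases the potential $\Phi(\mathcal V) = \sum_{v \in V(G)} c(v)$, where $c(v)$ denotes the index of the part containing $v$. The process therefore terminates, and when it does $(\mathcal{P}_1)$ holds by construction; moreover, no part ever becomes empty, since $t = \chi(G)$ rules out any proper $(t-1)$-colouring of $G$.

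For the second phase, I would first analyse how a swap can actually violate $(\mathcal{P}_1)$. Consider $w \in V_i$ with $i \geq 3$ and an edge $uv \in M_0$ with $u \in V_1$, $v \in V_2$. Swapping $uv$ affects the $V_1$-neighbourhood of $w$ only through $u$ (which leaves it if $u \in N(w)$) and through $v$ (which enters it if $v \in N(w)$). Classifying $M_0$-edges accordingly, the only $V_1$-neighbours of $w$ that a suitable choice of swap can collectively destroy are the $V_1$-endpoints of $M_0$-edges whose other endpoint is \emph{not} a neighbour of $w$; every other $V_1$-neighbour of $w$ is either outside $V(M_0)$ or paired by $M_0$ with another neighbour of $w$, and so survives every swap. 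Consequently, $(\mathcal{S})$ fails at $w$ precisely when every $V_1$-neighbour (or every $V_2$-neighbour) of $w$ is of this ``fragile'' kind. My plan is then to identify each such fragile vertex and perform a local repair: if $u \in V_1 \cap N(w)$ is the $V_1$-endpoint of a fragile $M_0$-edge $uv$, I would raise $u$ to a suitable higher class so that $uv$ leaves $M_0$, possibly cascading further push-down moves to restore $(\mathcal{P}_1)$ everywhere. Termination would be controlled by a lexicographic potential combining the number of fragile vertices and $\Phi(\mathcal V)$.

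The main obstacle lies squarely in this second phase. The isolatedness of $uv$ in $G[V_1 \cup V_2]$ is the key structural lever -- it guarantees that $u$ and $v$ have very few partners in $V_1 \cup V_2$ -- but one must simultaneously control that (i) raising $u$ to a higher class does not conflict with any vertex already sitting there, (ii) no new fragile configuration is created elsewhere by the move or by the subsequent cascaded push-downs, and (iii) the repair truly terminates. I expect that handling cascaded moves cleanly will ultimately require choosing the initial proper $t$-colouring as the extremum of a single, carefully designed functional over all proper $t$-colourings of $G$, so that the required local adjustments become forced by the extremality rather than performed ad hoc.
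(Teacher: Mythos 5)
There is a genuine gap: your second phase is a plan, not a proof, and it is exactly where the whole difficulty of Property~$(\mathcal S)$ lies. You yourself list the unresolved obstacles --- that raising $u$ to a higher class may clash with vertices already there, that the repair and the cascaded push-downs may create new fragile configurations, and that termination of the whole procedure is unclear --- and none of these is discharged. As stated, the argument establishes only $(\mathcal I)$ and $(\mathcal P_1)$, i.e.\ the easy half of the lemma. (A smaller issue: your analysis of how a swap can break $(\mathcal P_1)$ only considers $w \in V_i$ with $i \geq 3$, whereas vertices of $V_2$ also require a neighbour in $V_1$; this case happens to be harmless, but it is not addressed.)

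The missing idea is the one you gesture at in your last sentence without identifying it: the functional you already use, $\Phi(\mathcal V) = \sum_{v} c(v)$ (equivalently $\sum_k k\,\abs{V_k}$), is \emph{invariant} under swapping an edge of $M_0$, since the swap moves one vertex from $V_1$ to $V_2$ and one from $V_2$ to $V_1$. Hence, if you take $\mathcal V$ to be a \emph{global} minimizer of $\Phi$ over all partitions into independent sets (rather than merely the endpoint of a greedy descent, which is only a local minimum), then any partition $\mathcal V'$ obtained by swapping any set of $M_0$-edges is still independent (by isolatedness of $M_0$-edges in $G[V_1 \cup V_2]$) and still a global minimizer of $\Phi$; your own push-down argument then shows $\mathcal V'$ satisfies $(\mathcal P_1)$. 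Property~$(\mathcal S)$ therefore comes for free from the extremal choice, and the entire second phase --- fragile vertices, local repairs, cascades, lexicographic potentials --- is unnecessary. This is precisely the paper's proof.
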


\begin{proof}
For a partition $\mathcal V = (V_1,\dots, V_t)$ of $V(G)$ where each $V_i$ is an independent set (note that such a partition exists, as every proper $t$-vertex-colouring of $G$ is one such partition), set $f(\mathcal V) = \sum_{k=1}^t k \cdot \abs{V_k}$. Among all possible $\mathcal{V}$'s, we consider a $\mathcal V$ that minimises $f(\mathcal V)$.  

Suppose that there is a vertex $u \in V_i$ with $i \geq 2$ for which Property $(\mathcal P_1)$ does not hold, \textit{i.e.}, there is a $j<i$ such that $u$ has no incident upward edge to $V_j$. By moving $u$ to $V_j$, we obtain another partition $\mathcal V'$ of $V(G)$ where every part is an independent set. However, note that $f(\mathcal V') = f(\mathcal V) + j - i < f(\mathcal V)$, a contradiction to the minimality of $\mathcal V$. From this, we deduce that every partition $\mathcal{V}$ minimising $f$ must verify Property~$(\mathcal P_1)$.

Let now $\mathcal V'$ be the partition of $V(G)$ obtained by successively swapping edges of $M_0(\mathcal V)$. Recall that the swapping operation preserves Property~$(\mathcal I)$ and observe that $f(\mathcal V) = f(\mathcal V')$. Hence, $\mathcal V'$ minimises $f$ and thus verifies Properties~$(\mathcal I)$ and~$(\mathcal P_1)$. Thus Property~$(\mathcal S)$ also holds, and $\mathcal V$ is a valid partition of $G$.
\end{proof}

From here, we thus assume that we have a valid partition $\mathcal{V}=(V_1, \dots, V_t)$ of $G$.


\subsection*{Step 2: Labelling the upward edges of $V_3, \dots, V_t$}\label{sec:123:step1}

From $G$ and $\mathcal V$, our goal now is to construct a $3$-labelling $\ell$ of $G$ achieving certain properties, the most important of which being that the only possible conflicts are between pairs of vertices of $V_1$ and $V_2$ that do not form an edge of $M_0$. The following result sums up the exact conditions we want $\ell$ to fulfil. Recall that a vertex $v$ is special by $\ell$, if $d_3(v)=1$, $d_2(v) \geq 2$ and $d_2(v)+d_3(v)$ is odd.
Note that special vertices are bichromatic.

\begin{lemma}\label{lem:123}
For every nice graph $G$ and every valid partition $(V_1,\dots,V_t)$ of $G$, there exists a $3$-labelling $\ell$ of $G$ such that:
\begin{enumerate}
    \item all vertices of $V_1$ are either $1$-monochromatic or $3$-monochromatic,
    \item all vertices of $V_2$ are either $1$-monochromatic or $2$-monochromatic,
    \item all vertices of $V_3 \cup \dots \cup V_t$ are bichromatic,
    \item no vertex is special,
    \item if $u \in V_1$ and $v \in V_2$ are adjacent, then $\ell(uv) = 1$,
    \item if two vertices $u$ and $v$ are in conflict, then $u \in V_1$ and $v \in V_2$ (or \textit{vice versa}), and at least one of them has a neighbour $w$ in $V_1 \cup V_2 \setminus \{u,v\}$.
\end{enumerate}
\end{lemma}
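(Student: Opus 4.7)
I would build $\ell$ in two sub-phases inside Step~2. The first sub-phase turns the all-$1$ labelling into one where every vertex of $V_3\cup\cdots\cup V_t$ is bichromatic with a pattern depending only on its part index, while every vertex of $V_1\cup V_2$ remains monochromatic. The guiding convention, enforced throughout, is that every edge of $G[V_1\cup V_2]$ keeps label $1$, every edge entering $V_1$ from a higher part uses a label in $\{1,3\}$, and every edge entering $V_2$ from a higher part uses a label in $\{1,2\}$. Under this convention, conditions (1), (2) and (5) of the lemma are automatic.

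A natural target for the first sub-phase is to give every $v\in V_i$ (with $i\geq 3$) a $(d_2(v),d_3(v))$ pair that is bichromatic, distinct across the parts, and never of the special form; for instance, $d_2(v)=1$ and $d_3(v)=i-2$ works, since $d_3=1$ occurs only for $i=3$, in which case $d_2+d_3=2$ is even. To realise such targets I would process the parts top-down, $i=t,t-1,\ldots,3$, and at each vertex $v\in V_i$ modify only upward edges of $v$ so that, taking into account the already-fixed contributions from $v$'s downward edges, the pair $(d_2(v),d_3(v))$ lands on the target. Property~$(\mathcal{P}_1)$ ensures enough upward capacity in every direction (at least one upward edge into each $V_j$ with $j<i$), and because the procedure only touches upward edges of $v$, the bichromatic status already achieved for deeper parts is not disturbed. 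This yields conditions (3) and (4), and combined with the label-colour convention forces the first half of (6): any conflict must be between a vertex of $V_1$ and a vertex of $V_2$.

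The main obstacle is the last clause of condition (6): when the conflicting pair $u\in V_1$, $v\in V_2$ satisfies $uv\in M_0(\mathcal V)$, we are forbidden from creating this conflict. This situation can arise only when both $u$ and $v$ remain $1$-monochromatic at the end of the first sub-phase, meaning no upward edge of $u$ was raised to $3$ and no upward edge of $v$ was raised to $2$. The second sub-phase handles each such offending $M_0$-edge in isolation: for $uv$, I would use Property~$(\mathcal{S})$ to swap it (which preserves everything already achieved on higher parts) and then perform a small surgical modification on the upward edges of the endpoint now sitting in $V_2$, promoting one edge from label $1$ to label $2$, and compensating at the other end of that edge in some $V_j$ with $j\geq 3$ by re-adjusting a different upward label so that the target pair $(d_2,d_3)$ is restored. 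The heart of the write-up will be verifying that such a surgery is always feasible and never introduces a new special vertex, a new $M_0$-conflict, or a violation of conditions (1)--(5); the room to do so comes precisely from the abundance of upward edges provided by~$(\mathcal{P}_1)$ and the swap freedom provided by~$(\mathcal{S})$.
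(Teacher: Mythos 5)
There are two genuine gaps, both at the heart of the construction. First, your target product types fix \emph{both} coordinates of $(d_2(v),d_3(v))$ exactly (e.g.\ $d_2(v)=1$, $d_3(v)=i-2$ for $v\in V_i$), but when the pass reaches $v$, its downward edges have already been relabelled by the deeper vertices, and your convention only constrains labels of edges entering $V_1$ and $V_2$, not those entering $V_i$ for $i\geq 3$. A vertex $v\in V_3$ whose deep neighbours each have the minimum number of upward edges guaranteed by $(\mathcal P_1)$ will receive label $3$ on every one of those downward edges (each such neighbour in $V_j$ must place $j-2$ threes on its $j-1$ upward edges, and the single $2$ is needed on its edge to $V_2$), so $d_3(v)$ can exceed $i-2$ before you touch any upward edge of $v$; since relabelling upward edges can only increase the degrees, the target $(1,1)$ is then unreachable. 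This is exactly why the paper fixes only \emph{one} coordinate per part ($d_3=n$ on $V_{2n}$, $d_2=n$ on $V_{2n+1}$, plus the parity of the sum) and couples this with the invariant that downward edges into an even (odd) part carry only labels $1$ or $2$ ($1$ or $3$): the coordinate being pinned down is precisely the one the downward edges cannot pollute. Some such ``one free coordinate per part, synchronised with a downward-label invariant'' design is forced, and your proposal does not identify it.

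Second, your post-hoc surgery for the $M_0$ conflicts does not close. After swapping $uv$ and promoting an edge $uw$ (with $w\in V_j$, $j\geq 3$) from $1$ to $2$, you must restore $w$'s degrees by re-adjusting another edge at $w$; but that edge has a second endpoint whose degrees --- hence its target, its monochromatic status, or its membership in a still-conflicting $M_0$ edge --- are disturbed in turn, and the compensation cascades with no termination argument. Several offending $M_0$ edges may also share the same deep neighbour $w$, compounding the damage. The paper avoids this entirely by handling $M_0$ \emph{inside} the single pass: it maintains the set $M$ of isolated edges whose two ends are still $1$-monochromatic, and when processing $u$ it selects one endpoint per edge of $M_u$ into a set $S_u$, swaps those edges so the selected endpoints lie in the part matching $u$'s free label, and labels the edges from $u$ to $S_u$ with that label --- killing those potential conflicts while simultaneously building $u$'s own free coordinate, with the parity repaired at the end by one reserved edge to $x_1$ or $x_2$ and at most one further swap. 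No already-processed vertex is ever revisited. Your two-sub-phase decomposition would need to be reworked into an integrated pass of this kind.
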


\begin{proof}
From now on, we fix the valid partition $\mathcal{V}=(V_1,\dots,V_t)$ of $G$. During the construction of $\ell$, we may have, however, to swap some edges of $M_0$, resulting in a different valid partition of $G$. Abusing the notations, for simplicity we will still denote by $\mathcal V$ any valid partition of $G$ obtained this way, through swapping edges. Recall that valid partitions are closed under swapping edges of $M_0$ (by Property~$(\mathcal P_2)$).

Our goal is to design $\ell$ so that it not only verifies the four colour properties of Items~1 to 4 of the statement, but also achieves the following refined product types, for every vertex $v$ in a part $V_i$ of $\mathcal V$:

\begin{itemize}
    \item $v \in V_1$: $v$ is $1$-monochromatic or $3$-monochromatic;
    \item $v \in V_2$: $v$ is $1$-monochromatic or $2$-monochromatic;
    
    \item $v \in V_3$: $v$ is bichromatic with $2$-degree~$1$ and even $\{2,3\}$-degree;
    
    \item $v \in V_4$: $v$ is bichromatic with $3$-degree~$2$ and odd $\{2,3\}$-degree;
    \item $v \in V_5$: $v$ is bichromatic with $2$-degree~$2$ and even $\{2,3\}$-degree;
    
    \item ...
    
    \item $v \in V_{2n}$, $n \geq 3$: $v$ is bichromatic with $3$-degree~$n$ and odd $\{2,3\}$-degree;
    \item $v \in V_{2n+1}$, $n \geq 3$: $v$ is bichromatic with $2$-degree~$n$ and even $\{2,3\}$-degree;
    
    \item ...
\end{itemize}

We start from $\ell$ assigning label~$1$ to all edges of $G$.
Let us now describe how to modify $\ell$ so that the conditions above are met for all vertices. We consider the vertices of $V_t,\dots,V_3$ following that order, from ``bottom to top'', and modify labels assigned to upward edges. An important condition we will maintain, is that every vertex in an odd part $V_{2n+1}$ ($n \geq 0$) has all its incident downward edges (if any) labelled~$3$ or~$1$, while every vertex in an even part $V_{2n}$ ($n \geq 1$) has all its incident downward edges (if any) labelled~$2$ or~$1$. Note that this is trivially verified for the vertices in $V_t$, since they have no incident downward edges.

At any point in the process, let $M$ be the set of edges of $M_0$ for which both ends are $1$-monochromatic (initially, $M=M_0$). When treating a vertex $u \in V_3 \cup \dots \cup V_t$, we define $M_u$ as the subset of edges of $M$ having an end that is a neighbour of $u$. For every edge $e\in M_u$, we choose one end of $e$ that is a neighbour of $u$ and we add it to a set $S_u$. Note that $\abs{S_u} = \abs{M_u}$. Another goal during the labelling process, to fulfil Item~6, is to label the edges incident to $u$ so that at least one end of every edge in $M_u$ is no longer $1$-monochromatic. Note that the set $M_u$ considered when labelling the edges incident to $u$ is not necessarily the set of edges of $M_0$ incident to a neighbour of $u$, as, during the whole process, some of these edges might be removed from $M$ when dealing with previous vertices in $V_3 \cup \dots \cup V_t$.

\medskip

Let us now consider the vertices in $V_t, \dots, V_3$ one by one, following that order.
Let thus $u \in V_{i}$ be a vertex that has not been treated yet, with $i \geq 3$. Recall that every vertex belonging to some $V_j$ with $j > i$ was treated earlier on, and thus has its desired product. Suppose that $i = 2n$ with $n \geq 2$ ($i=2n+1$ with $n \geq 1$, resp.). Recall also that $u$ is assumed to have all its incident downward edges labelled~$1$ or~$2$ ($3$, resp.), due to how vertices in $V_j$'s with $j>i$ have been treated earlier on.
Also, all upward edges incident to $u$ are currently assigned labelled~$1$ by $\ell$.

If $M_u \neq \varnothing$, then we swap edges of $M_u$, if necessary, so that every vertex in $S_u$ belongs to $V_2$ ($V_1$, resp.). This does not invalidate any of our invariants since both ends of an edge in $S_u$ are $1$-monochromatic.

In any case, by Property $(\mathcal P_1)$, we know that, for every $j < i$, there is a vertex $x_j \in V_j$ which is a neighbour of $u$. In particular, the vertex $x_1$ ($x_2$, resp.) does not belong to $S_u$ (but may be the other end of an edge in $M_u$). We label the edges $ux_3,ux_5,\dots,ux_{2n-1}$ with~$3$ ($ux_4,ux_6,\dots,ux_{2n}$ with~$2$, resp.). Note that, at this point, $d_3(u) = n-1$ ($d_2(u) = n-1$, resp.). 
To finish dealing with $u$, 
we need to distinguish two cases depending on whether $M_u$ is empty or not.

\begin{itemize}
    \item Suppose first that $M_u = \varnothing$. Label $ux_1$ with~$3$ ($ux_2$ with~$2$, resp.). Now $u$ has the desired $3$-degree ($2$-degree, resp.). If $i > 3$, then label $ux_{i-2}$ with~$2$ ($3$, resp.) so that $u$ is sure to be bichromatic. If $i > 3$ and the $\set{2,3}$-degree of $u$ does not have the desired parity, then label $ux_2$ with~$2$ ($ux_1$ with~$3$, resp.). If $u \in V_3$ and the $\set{2,3}$-degree of $u$ is even, then $u$ is already bichromatic since $d_2(u) = 1$. If $u \in V_3$ and the $\set{2,3}$-degree of $u$ is odd, then label $ux_1$ with~$3$ to adjust the parity of the $\set{2,3}$-degree of $u$ and make $u$ bichromatic. In all cases, at this point $u$ is bichromatic with $3$-degree~$n$ ($2$-degree~$n$, resp.) and odd $\set{2,3}$-degree (even $\set{2,3}$-degree, resp.), which is precisely what is desired for $u$.
    
    \item Suppose now that $M_u \neq \varnothing$. Let $z \in S_u$ and let $e$ be the edge of $M_u$ containing $z$. For every vertex $w \in S_u \setminus \set{z}$, we label the edge $uw$ with~$2$ ($3$, resp.). 
    Then:
    \begin{itemize}
        \item If $d_2(u) + d_3(u)$ is odd (even, resp.), then label $uz$ with~$2$ ($3$, resp.) and $ux_1$ with~$3$ ($ux_2$ with~$2$, resp.). In this case, every edge in $M_u$ is incident to at least one vertex which is not $1$-monochromatic, while $u$ is bichromatic with $3$-degree $n$ ($2$-degree $n$, resp.)  and odd $\set{2,3}$-degree (even $\set{2,3}$-degree, resp.).
    
        \item If $d_2(u) + d_3(u)$ is even (odd, resp.) and $d_2(u) > 0$ ($d_3(u) > 0$, resp.), then swap $e$ and label $uz$ with~$3$ ($2$, resp.). Note that, after the swap of $e$, we have $z \in V_1$ ($z \in V_2$, resp.). In this case, every edge in $M_u$ is incident to at least one vertex which is not $1$-monochromatic, while $u$ is bichromatic with $3$-degree $n$ ($2$-degree $n$, resp.)  and odd $\set{2,3}$-degree (even $\set{2,3}$-degree, resp.).

        \item The last case is when $d_2(u) + d_3(u)$ is even (odd, resp.) and $d_2(u) = 0$ ($d_3(u) = 0$, resp.). If $i > 4$, then we can label $ux_{i-2}$ with~$2$ ($3$, resp.) and fall back into one of the previous cases.    If $i = 4$, then the only edge labelled~$3$ is the edge $ux_3$ which implies that $d_3(u) = 1$, which is impossible since $d_2(u) = 0$ and thus $d_2(u) + d_3(u)$ is odd which contradicts our hypothesis. If $i=3$, then the conditions of this case imply that $d_2(u) \geq 1$ while every upward edge incident to $u$ is labelled~$1$ or~$3$ and similarly for every incident downward edge; this case thus cannot occur.
    \end{itemize}
    
    To finish, we remove the edges of $M_u$ from $M$ since their two ends are not both $1$-monochromatic anymore.
\end{itemize}

At the end of this process, all vertices in $V_1$ are $1$-monochromatic or $3$-monochromatic, while all vertices in $V_2$ are $1$-monochromatic or $2$-monochromatic.
Every vertex in $V_3 \cup \dots \cup V_t$ is bichromatic and there are no conflicts involving any pair of these vertices. Indeed if $a\in V_i$ and $b \in V_j$ are adjacent with $i> j \geq 3$, then either $i$ and $j$ do not have the same parity, in which case $a$ and $b$ do not have the same $\set{2,3}$-degree; or both $i$ and $j$ are even (odd, resp.) and $d_3(a) = \frac{i}{2} \neq \frac{j}{2} = d_3(b)$ ($d_2(a) = \frac{i-1}{2} \neq \frac{j-1}{2} = d_2(b)$, resp.).  Note also that no vertex in $G$ is special, as, by definition, special vertices have $3$-degree~$1$, $2$-degree at least~$2$, and odd $\set{2,3}$-degree. Moreover, we did not change the label of any edge in the cut $(V_1,V_2)$.

Finally, suppose that there is a conflict between two vertices $u$ and $v$. Previous remarks imply that $u \in V_1$ and $v \in V_2$ (or \textit{vice versa}) and that both $u$ and $v$ are $1$-monochromatic. If none of $u$ and $v$ has another neighbour $w$ in $V_1 \cup V_2$, then the edge $uv$ belongs to the set $M_0$. Since $G$ is nice, one of $u$ or $v$ must have a neighbour in $V_3 \cup \dots \cup V_t$. Hence $uv \in M_z$ for one such neighbour $z$. Recall also that we relabelled the edges incident to $z$ in such a way that, for every edge of $M_z$, at least one incident vertex became $2$-monochromatic or $3$-monochromatic, a contradiction to the existence of $u$ and $v$. Hence, all properties of the lemma hold.
\end{proof}

\subsection*{Step 3: Labelling the edges between $V_1$ and $V_2$}\label{sec:123:step2}

From now on, we will modify a $3$-labelling $\ell$ of $G$ obtained by applying Lemma~\ref{lem:123}.
We denote by $\mathcal H$ the set of the connected components of $G[V_1 \cup V_2]$ that contain two adjacent vertices $u \in V_1$ and $v \in V_2$ having the same product by $\ell$. By Items~1 and~2 of Lemma~\ref{lem:123}, such $u$ and $v$ are $1$-monochromatic.
Also, by Item~6 of Lemma~\ref{lem:123}, recall that every connected component of $\mathcal{H}$ has at least two edges.
In what follows, we only relabel edges of some connected components $H \in \mathcal H$ while making sure that their vertices (in $V_1 \cup V_2$) are monochromatic or special. This ensures that only vertices of $H$ have their product affected, thus no new conflicts involving vertices in $V_3 \cup \dots \cup V_t$ are created. 

For a subgraph $X$ of $H \in \mathcal{H}$ (possibly $X = H$), if, after having relabelled edges of $X$, no conflict remains between vertices of $X$ and all vertices of $X$ are either monochromatic or special, then we say that $X$ verifies Property~$(\mathcal P_3)$.

\begin{lemma}\label{lem:123:P3}
If we can relabel the edges of every $H \in \mathcal H$ so that every $H$ verifies Property~$(\mathcal P_3)$, then the resulting $3$-labelling is p-proper.
\end{lemma}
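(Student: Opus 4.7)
The plan is to verify that, after the relabelings performed in Step~3, no two adjacent vertices of $G$ share the same product. The crucial observation I would lean on is that every edge touched by Step~3 lies inside some $H\in\mathcal H$, hence inside $G[V_1\cup V_2]$; so the products of all vertices of $V_3\cup\dots\cup V_t$, and of all vertices of $V_1\cup V_2$ not lying in $\bigcup_{H\in\mathcal H}V(H)$, are exactly those produced by Lemma~\ref{lem:123} (in particular, edges from such vertices to $V_3\cup\dots\cup V_t$ still carry their Step~2 labels). I will then split the analysis of an edge $xy$ of $G$ according to where its two endpoints lie.

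If both $x,y$ belong to $V_3\cup\dots\cup V_t$, their products are unchanged, and conflict-freeness was already established in the proof of Lemma~\ref{lem:123} (from the $\{2,3\}$-degree parities and the prescribed $d_2$ or $d_3$ of each part). If both $x,y$ belong to $V_1\cup V_2$, then they lie in a common connected component $C$ of $G[V_1\cup V_2]$: if $C\notin\mathcal H$, then $C$ was already conflict-free by the very definition of $\mathcal H$ and none of its labels changed, while if $C\in\mathcal H$, then $(\mathcal P_3)$ precisely says that $C$ is internally conflict-free after the relabeling.

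The only remaining case is an edge $uv$ with $u\in V_1\cup V_2$ and $v\in V_i$ for some $i\ge 3$. Here $u$ is monochromatic or special (by Lemma~\ref{lem:123} outside $\mathcal H$, by $(\mathcal P_3)$ inside $\mathcal H$), while $v$ keeps its bichromatic Step~2 type. Coprimality of $2$ and $3$ reduces $\rho_\ell(u)=\rho_\ell(v)$ to the system $d_2(u)=d_2(v)$ and $d_3(u)=d_3(v)$. If $u$ is $1$-, $2$- or $3$-monochromatic, then the bichromaticity of $v$, giving $d_2(v),d_3(v)\ge 1$, will kill this instantly. The delicate subcase, which I expect to be the main (essentially only) obstacle of the argument, is $u$ special, where $d_3(u)=1$ and $d_2(u)$ is even and at least $2$; I would dispose of it by going through every product type imposed in Step~2 and checking, via its explicit $d_2$ or $d_3$ value and the imposed parity of $d_2(v)+d_3(v)$, that none of them can simultaneously realise $d_3(v)=1$ and $d_2(v)$ even $\ge 2$. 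This final parity bookkeeping is exactly where the carefully tuned $\{2,3\}$-degree parities in the product types of Step~2 pay off, and once it is in place the lemma is proved.
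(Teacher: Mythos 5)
Your proposal is correct and follows essentially the same route as the paper: the only nontrivial case is an edge between $V_1\cup V_2$ and $V_3\cup\dots\cup V_t$, which is settled by noting that the former endpoint is monochromatic or special while the latter remains bichromatic and non-special. The final ``parity bookkeeping'' you anticipate is unnecessary work, since Item~4 of Lemma~\ref{lem:123} already states that no vertex of $V_3\cup\dots\cup V_t$ is special, and a conflict with a special vertex of $V_1\cup V_2$ would force equal $2$- and $3$-degrees and hence specialness.
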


\begin{proof}
This is because if we get rid of all conflicts in $\mathcal{H}$, then the only possible remaining conflicts are between vertices in $V_1 \cup V_2$ and in $V_3 \cup \dots \cup V_t$. In particular, recall that any two vertices of two distinct connected components $H_1,H_2 \in G[V_1 \cup V_2]$ cannot be adjacent.
Note also that, because we only relabelled edges in $\mathcal{H}$, the vertices in $V_3 \cup \dots \cup V_t$ retain the product types described in Lemma~\ref{lem:123}. In particular, they remain bichromatic and none of them is special. Thus, they cannot be in conflict with the vertices in $V_1 \cup V_2$.
\end{proof}

In order to show that we can relabel the edges of every $H \in \mathcal H$ so that it fulfils Property $(\mathcal P_3)$, the following result will be particularly handy.

\begin{lemma}\label{lemma:bipartite}
For every integer $s \in \set{2,3}$, every connected bipartite graph $H$ whose edges are labelled $1$ or $s$, and 
 any vertex $v$ in any part $V_i \in \{V_1,V_2\}$ of $H$, we can relabel the edges of $H$ with $1$ and $s$ so that $d_s(u)$ is odd (even, resp.) for every $u \in V_i \setminus \{v\}$, and $d_s(u)$ is even (odd, resp.) for every $u \in V_{3-i}$. 
\end{lemma}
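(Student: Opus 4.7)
The plan is to recast the labelling task as a $T$-join existence question in a connected graph. Any $\{1,s\}$-labelling of $H$ is determined by the subset $F \subseteq E(H)$ of edges receiving label $s$, and then $d_s(u)$ is precisely the degree $d_F(u)$ of $u$ in the spanning subgraph $(V(H), F)$. The prescribed parity condition on $d_s$ therefore amounts to requiring that the set of odd-degree vertices of $F$ coincide with some prescribed $T \subseteq V(H)$, i.e., that $F$ be a $T$-join of $H$.

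Focusing on the first of the two regimes offered by the statement (the other being entirely symmetric, with the roles of $V_i$ and $V_{3-i}$ swapped), I want $d_F(u)$ to be odd for $u \in V_i \setminus \{v\}$ and even for $u \in V_{3-i}$, with no constraint at $v$. I would set
\[
T = \begin{cases} V_i \setminus \{v\}, & \text{if } |V_i| \text{ is odd,} \\ V_i, & \text{if } |V_i| \text{ is even,}\end{cases}
\]
so that $T \supseteq V_i \setminus \{v\}$, $T \cap V_{3-i} = \varnothing$, and, crucially, $|T|$ is even in both cases. The freedom granted by the statement on the parity of $d_s(v)$ is exactly what allows me to absorb $v$ into $T$ or not so as to always reach an even-sized $T$.

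The last step is to invoke the classical fact that a connected graph admits a $T$-join whenever $|T|$ is even: pair up the elements of $T$ arbitrarily, pick for each pair a path in $H$ between its two endpoints (which is possible since $H$ is connected), and take the symmetric difference of these edge sets; the resulting $F$ has exactly $T$ as its odd-degree vertex set. Labelling the edges of $F$ with $s$ and the remaining edges with $1$ then produces the required labelling. The main delicate point I anticipate is the case distinction on the parity of $|V_i|$ needed to guarantee that $|T|$ is even, but it is handled uniformly above, so the whole argument reduces to a clean application of a well-known existence theorem.
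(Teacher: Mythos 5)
Your proof is correct and rests on the same mechanism as the paper's own argument: flipping the labels along a path toggles the parity of $d_s$ exactly at the path's two endpoints, and the exempt vertex $v$ is used to absorb any leftover parity. The paper implements this as a greedy loop (while some $u \neq v$ is unsatisfied, flip a $u$--$v$ path), whereas you package it as the classical existence of a $T$-join for even $\abs{T}$; the two are interchangeable here, and your parity bookkeeping for $T$ (including adding $v$ when $\abs{V_i}-1$ is odd, and the analogous adjustment in the ``resp.'' case) is sound.
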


\begin{proof}
As long as $H$ has a vertex $u$ different from $v$ that does not verify the desired condition, apply the following.
Choose any path $P$ from $u$ to $v$, which exists by the connectedness of $H$.
Now follow $P$ from $u$ to $v$, and change the labels of the traversed edges from~$1$ to~$s$ and \textit{vice versa}. 
It can be noted that this alters the parity of the $s$-degrees of $u$ and $v$,
while this does not alter that parity for any of the other vertices of $H$.
Thus, this makes $u$ satisfy the desired condition, while the situation did not change for the other vertices different from $u$ and $v$.
Thus, once this process ends, all vertices of $H$ different from $v$ have their $s$-degree being as desired by the resulting labelling.
\end{proof}

We are now ready to treat the connected components $H \in \mathcal{H}$ independently, so that they all meet Property~$(\mathcal P_3)$.
To ease the reading, we distinguish several cases depending on the types and on the degrees of the vertices that $H$ includes. In each of the successive cases we consider, it is implicitly assumed that $H$ does not meet the conditions of any previous case.

\begin{claim}\label{claim:123:noblue}
If $H \in \mathcal{H}$ contains a $3$-monochromatic vertex $v \in V_1$,
or a $1$-monochromatic vertex $v_1 \in V_1$ having two  $1$-monochromatic neighbours $u_1,u_2 \in V_2$ with degree~$1$ (in $H$), then we can relabel edges of $H$ so that $H$ verifies Property~$(\mathcal P_3)$.
\end{claim}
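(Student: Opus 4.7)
The plan is to handle the two configurations separately, by relabelling edges of $H$ via Lemma~\ref{lemma:bipartite} after a hand-crafted local adjustment, viewing $H$ as a connected bipartite graph with parts $A = V_1\cap V(H)$ and $B = V_2\cap V(H)$.

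For the first configuration (a $3$-monochromatic $v \in V_1$ lies in $H$), the virtue of $v$ is that its product contains a factor $3$ while $V_2$-vertex products do not, so $v$ is never in conflict with a $V_2$-neighbour and can serve as a ``parity sink'' for Lemma~\ref{lemma:bipartite}. I would relabel edges of $H$ using only labels from $\{1,3\}$: this keeps every $A$-vertex automatically monochromatic ($d_2$ stays $0$, since the only outside-$H$ edges carrying label $2$ for an $A$-vertex would contradict Items~1 and~5 of Lemma~\ref{lem:123}), and the sole precaution on the $B$ side is to avoid placing a label~$3$ on any edge incident to a $2$-monochromatic $B$-vertex (otherwise it becomes bichromatic but not special). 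So I would freeze those edges and apply Lemma~\ref{lemma:bipartite} with $s=3$ and $v$ distinguished on the remaining subgraph, prescribing odd $d_3^H$ at every $1$-monochromatic $B$-vertex of $v$'s component so that each becomes $3$-monochromatic and its $1$-mono/$1$-mono conflict with any $A$-neighbour disappears. A component of the remaining subgraph not containing $v$ is handled the same way, using any $A$-vertex (or, when possible, another $3$-monochromatic one) as distinguished; this is safe since $A$-vertices stay monochromatic under $\{1,3\}$-relabellings.

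For the second configuration, I would first relabel $v_1u_1$ and $v_1u_2$ with label~$3$. Since $u_1,u_2$ have degree~$1$ in $H$ and were $1$-monochromatic, each becomes $3$-monochromatic with $d_3=1$, so its conflict with $v_1$ (and with every other vertex) vanishes; meanwhile $v_1$ itself becomes $3$-monochromatic with $d_3(v_1)\geq 2$, which places $H$ into the hypothesis of the first configuration with $v_1$ as the distinguished vertex, so applying the first strategy finishes the proof.

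The main obstacle will be to rule out residual $3$-mono/$3$-mono conflicts in the first sub-case: two adjacent vertices of $A$ and $B$ in $H$ could both end up $3$-monochromatic with the same $d_3$. I expect to handle this by using the freedom of Lemma~\ref{lemma:bipartite} to tune the $d_3^H$ parities on both sides (not only on $B$) so that no two adjacent vertices attain the same $d_3$; and, when parity alone does not suffice, by introducing a label~$2$ on a carefully chosen edge at a conflict-prone vertex to convert it into a special one (which is possible precisely because the $\{1,2\}$ adjustment leaves $d_3$ untouched). Verifying that these final local adjustments never cascade into new conflicts with already-processed vertices is where the most care will be needed.
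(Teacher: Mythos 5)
Your opening moves coincide with the paper's: the second configuration is reduced to the first by setting $\ell(v_1u_1)=\ell(v_1u_2)=3$, and a bipartite parity relabelling via Lemma~\ref{lemma:bipartite} is the right engine. But your plan for the first configuration --- a single $\{1,3\}$-relabelling of the unfrozen part of $H$ --- founders exactly on the obstacle you flag at the end, and the repairs you sketch do not close it. Lemma~\ref{lemma:bipartite} leaves the parity of the distinguished vertex uncontrolled, so each component yields one ``sink'' in $V_1$ whose $3$-degree may end up odd and equal to that of an adjacent $V_2$-vertex that you have just forced to be $3$-monochromatic with odd $3$-degree. Concretely, take $H$ a path $v-b-a-b'$ with $v\in V_1$ $3$-monochromatic via one external downward edge and $a,b,b'$ all $1$-monochromatic: prescribing odd $d_3$ on $b,b'$ and even on $a$ can produce $d_3(v)=d_3(b)=1$, and neither of your fixes applies --- you cannot re-tune the sink's parity (that is precisely what the lemma cannot do; flipping which side gets odd $d_3$ only relocates the collision), and you cannot make $v$ or $b$ special, since special requires $d_3=1$ \emph{and} $d_2\geq 2$ with odd sum, while adding a label~$2$ inside $H$ turns its $V_1$-endpoint bichromatic and non-special.

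The paper's proof is organised around exactly this difficulty, in a way your single-pass scheme cannot imitate. It first removes the set $X$ of $3$-monochromatic $V_1$-vertices and relabels each component of $H-X$ with labels $1$ and $2$ only, separating $V_1$ from $V_2$ there by the parity of the $2$-degree (legitimate, since a $V_1$-vertex with positive even $2$-degree is still monochromatic). Only the per-component exceptional vertices $y_i$ can remain in trouble, and among those only the ones with a $1$-monochromatic witness neighbour $w_i$ survive to a second stage, where a $\{1,3\}$ bipartite relabelling is applied to the subgraph induced by $X\cup Y$; the witnesses $w_i$ are then exactly the spare resource used to repair the sink collisions at the chosen $x_k$'s (rerouting a label~$3$ from $x_ky_i$ to $y_iw_i$, or adding two labels~$3$). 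Your proposal has no analogue of the label-$2$ phase or of the witnesses, so the sink collisions are genuinely unresolved. A secondary under-specification: after the $u_1,u_2$ move, these are $3$-monochromatic vertices of $V_2$, which your freezing rule, stated only for $2$-monochromatic $V_2$-vertices, does not protect; the paper explicitly excludes them from the components it relabels and re-checks them at the end.
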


\begin{proof}
Recall that all edges of $H$ (and thus in $\mathcal H$) are assigned label~$1$; thus, if a vertex of $H$ is $3$-monochromatic, then it must be due to incident downward edges to $V_3, \dots, V_t$.

If $H$ has a $1$-monochromatic vertex $v_1 \in V_1$  having two degree-$1$ $1$-monochromatic neighbours $u_1,u_2 \in V_2$, then we set $\ell(v_1u_1) = \ell(v_1u_2) = 3$. Note that $u_1$ and $u_2$ become $3$-monochromatic with $3$-degree~$1$, and are thus no longer in conflict with $v_1$, as it becomes $3$-monochromatic with $3$-degree~$2$.
Note that either we got rid of all conflicts in $H$ and $H$ now verifies Property~$(\mathcal P_3)$ as desired, or conflicts between other $1$-monochromatic vertices of $H$ remain.
In the latter case, we continue with the following arguments.

Assume $H$ has remaining conflicts, and that $H$ has a $3$-monochromatic vertex $v \in V_1$ (and, due to the previous process, perhaps $3$-monochromatic vertices $u_1$ and $u_2$ in $V_2$, in which case their $3$-degree (and degree in $H$) is precisely~$1$, while their unique neighbour $v$ in $V_1 \cap V(H)$ is $3$-monochromatic with $3$-degree~$2$).
Let $X$ be the set of all $3$-monochromatic vertices of $H$ belonging to $V_1$. Let $C_1,\dots,C_q$ denote the $q \geq 1$ connected components of $H-X$ that do not contain any $3$-monochromatic vertex of $V_2$ (the vertices $u_1$ and $u_2$ we dealt with earlier on). For every $C_i$, we choose arbitrarily a vertex $x_i \in X$ and a vertex $y_i \in C_i$ such that $x_i$ and $y_i$ are adjacent in $H$. Note that the vertices of $C_i$ are either $1$-monochromatic or $2$-monochromatic (in which case they belong to $V_2$), since all $3$-monochromatic vertices of $H$ are part of $X$ (or are the vertices $u_1$ and $u_2$ dealt with earlier on, which we have omitted for now and are not part of the $C_i$'s).

By Lemma~\ref{lemma:bipartite}, in every $C_i$ we can relabel the edges with~$1$ and~$2$ so that all vertices in $(V_2 \cap V(C_i)) \setminus \{y_i\}$ are $2$-monochromatic with odd $2$-degree, while all vertices in $V_1 \cap V(C_i)$ are $2$-monochromatic with even $2$-degree or possibly $1$-monochromatic if their even $2$-degree is~$0$. In particular, recall that $y_i$ must be $1$-monochromatic or $2$-monochromatic. If $y_i$ has odd $2$-degree, then there are no conflicts between vertices of $C_i$. If $y_i$ has even non-zero $2$-degree, then we set $\ell(x_iy_i)=3$, thereby making $y_i$ special.

Let $Y$ be the set containing all $1$-monochromatic $y_i$'s having a $1$-monochromatic neighbour $w_i$ in $C_i$. Let $H'$ be the subgraph of $H$ induced by $Y \cup X$. Note that every edge of $H'$ is labelled~$1$.
Let now $Q_1, \dots, Q_p$ denote the connected components of $H'$ and choose $x_k \in X \cap V(Q_k)$ for every $k \in \set{1,\dots,p}$. For every $k$, we apply  Lemma~\ref{lemma:bipartite} with labels~$1$ and~$3$ so that all vertices in $V_2 \cap V(Q_k)$ get $3$-monochromatic with odd $3$-degree, while all vertices in $V_1 \cap V(Q_k) \setminus \{x_k\}$ get $3$-monochromatic with even $3$-degree or possibly $1$-monochromatic if their $3$-degree is~$0$.

If $x_k$ is involved in a conflict with a vertex $y_i \in V_2 \cap V(Q_k)$, then this is because $x_k$ has odd $3$-degree. Then:

\begin{itemize}
    \item If $\ell(x_ky_i) = 3$, then $d_3(y_i) = d_3(x_k) \geq 3$ since $x_k\in X$ ($x_k$ must thus be incident to at least one other edge labelled $3$, either a downward edge to $V_3,\dots,V_t$ or an edge incident to $u_1$ (and similarly an edge incident to $u_2$)). We here assign label~$1$ to the edge $x_ky_i$ and label~$3$ to the edge $y_iw_i$. This way, $x_k$ gets even $3$-degree while the $3$-degree of $y_i$ does not change. Note that $y_i$ and $w_i$ are not in conflict since $d_3(w_i) = 1$ and $d_3(y_i) \geq 3$. 

    \item Otherwise, if $\ell(x_ky_i) = 1$, then we assign label~$3$ to the edge $x_ky_i$ and label~$3$ to the edge $y_iw_i$. This way, $x_k$ gets even $3$-degree while the $3$-degree of $y_i$ remains odd and must be at least $3$. Again $y_i$ and $w_i$ are not in conflict since $d_3(w_i) = 1$ and $d_3(y_i) \geq 3$. 
\end{itemize}

We claim that we got rid of all conflicts in $H$. Indeed, consider two adjacent vertices $a \in V_1 \cap V(H)$ and $b \in V_2  \cap V(H)$. 
Suppose first that $a$ and $b$ belong to some $C_i$. Note that, with the exception of $y_i$ and maybe of the vertex $w_i$ (if it exists and $y_i \in Y$), every vertex of $C_i$ is $1$-monochromatic or $2$-monochromatic, the vertices of $V_1 \cap V(C_i)$ having even $2$-degree and the vertices of $V_2 \cap V(C_i)$ having odd $2$-degree.
Thus, no conflict involves two of these vertices. 
Suppose now that $b = y_i$. 
If $y_i$ is $2$-monochromatic with odd $2$-degree, then there is no conflict involving $y_i$ in $C_i$ since all of its neighbours in $C_i$ have even $2$-degree. If $y_i$ is special, then it is the only special vertex of $C_i$, so, here again, it cannot be involved in a conflict. 
If $y_i \notin Y$ and $y_i$ is $1$-monochromatic, then $y_i$ has no other $1$-monochromatic neighbour in $C_i$ by definition of $Y$. If $y_i \in Y$, then $y_i$ is $3$-monochromatic with odd $3$-degree, the only other possible $3$-monochromatic neighbour of $y_i$ in $C_i$ being $w_i$, but we showed previously that their $3$-degrees differ. Thus, in all cases, there cannot be conflicts between vertices of $C_i$. 

We are left with the case where $a$ and $b$ do not belong to the same $C_i$. In particular, this implies that $a \in X$ and that $a$ is $3$-monochromatic. The only possible $3$-monochromatic vertices in $V_2$ are the vertices of $Y$, which have odd $3$-degree, and the $3$-monochromatic vertices $u_1$ and $u_2$ with $3$-degree~$1$ and degree~$1$ in $H$ which might have been created at the very beginning of the proof. If $b \in Y$, then, due to the application of Lemma~\ref{lemma:bipartite} above, the only vertex of $X$ which can have odd $3$-degree is some $x_k$, but for this vertex we either ensured that it was involved in no conflict, or we tweaked the labelling so that it got even $3$-degree without modifying the labelling properties obtained through Lemma~\ref{lemma:bipartite}. If $b$ is $u_1$ or $u_2$, then $b$ has only one neighbour $v$. Note that the edges $vu_1$ and $vu_2$ are still labelled~$3$ as they are not part of the $Q_i$'s, and, thus, $d_3(b) = 1$ and $d_3(v) \geq 2$.
Hence, there is no conflict between vertices of $X$ and other vertices of $H$.
This implies that $H$ verifies Property~$(\mathcal P_3)$.
\end{proof}

\begin{claim}\label{claim:123:2neigh}
If $H$ contains a $1$-monochromatic vertex $u \in V_2$ with at least two neighbours in $H$, then we can relabel edges of $H$ so that $H$ verifies Property~$(\mathcal P_3)$.
\end{claim}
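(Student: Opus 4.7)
The plan is to apply Lemma~\ref{lemma:bipartite} to $H$ with $s = 2$ and distinguished vertex $u$. Since the hypotheses of Claim~\ref{claim:123:noblue} have been excluded, every vertex $w \in V_1 \cap V(H)$ is currently $1$-monochromatic in $G$, so all of its downward edges to $V_3 \cup \dots \cup V_t$ are labelled~$1$. After the application, $w$'s $d_2$ in $G$ coincides with its $d_2$ in $H$, which the lemma makes even, so $w$ becomes either $1$-monochromatic (when its $d_2 = 0$) or $2$-monochromatic. Likewise, every $v \in V_2 \cap V(H) \setminus \{u\}$ gets odd $d_2$ in $H$ of at least~$1$, so $v$ is $2$-monochromatic in $G$.

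I then handle $u$ itself. Since $u$ was $1$-monochromatic, its $d_2$ in $G$ equals its $d_2$ in $H$ after the lemma. If that value is odd, then $u$ is $2$-monochromatic with $d_2$ of different parity from any $V_1$ neighbour, and so is in conflict with none of them. If instead the lemma's output makes $u$'s $d_2$ even (in particular $0$, leaving $u$ still $1$-monochromatic and in conflict with its $1$-monochromatic $V_1$ neighbours), I invoke the assumption $\deg_H(u) \geq 2$: flipping the label of $uw_1$ between~$1$ and~$2$ toggles both $d_2(u)$ and $d_2(w_1)$, and a compensating modification along another edge (such as $uw_2$, or an edge in a carefully chosen path of $H$) restores the required even parity at $w_1$ while giving $u$ odd $d_2$.

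The main obstacle is ruling out potential new conflicts between a vertex $w \in V_1 \cap V(H)$ and a vertex $v \in V_2 \cap V(H) \setminus \{u\}$ that was originally $2$-monochromatic. In this case, $v$'s total $d_2(v) = d_2^H(v) + d_2^{\mathrm{down}}(v)$ in $G$ after the lemma depends on the downward contribution $d_2^{\mathrm{down}}(v) \geq 1$, whose parity may end up matching $w$'s even $d_2$ exactly, creating a conflict where there was none before the relabelling. To address this, the plan is to combine a parity analysis (splitting on the parity of $d_2^{\mathrm{down}}(v)$) with a magnitude argument (using that $d_2(v) \geq 2$, so for a conflict $w$ must be $2$-monochromatic of matching value) and, if a potential conflict persists, to perform additional local label flips that exploit the slack offered by $u$ having at least two neighbours, so that $u$'s or $v$'s product is shifted away from $w$'s without disturbing the parities already established on the rest of $H$.
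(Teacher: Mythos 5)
Your opening move --- a single application of Lemma~\ref{lemma:bipartite} to all of $H$ with $u$ as the exceptional vertex --- is sound and genuinely more direct than the paper's decomposition: it does reduce everything to the single vertex $u$. (Your third paragraph's worry about $d_2^{\mathrm{down}}(v)$ is moot: the path-flipping proof of the lemma controls the parity of the \emph{total} $2$-degree, which is how the paper itself uses it, so every $V_1$-vertex of $H$ ends up $1$- or $2$-monochromatic with even $d_2$, every vertex of $V_2\cap V(H)\setminus\{u\}$ ends up $2$-monochromatic with odd $d_2$, and no conflict can involve two such vertices.) The genuine gap is your fix for $u$ when its $2$-degree comes out even. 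Relabelling an edge of $H$ between $1$ and $2$ toggles the parity of $d_2$ at exactly its two endpoints, so any sequence of such flips changes the parity vector $\paren{d_2(x) \bmod 2}_{x \in V(H)}$ by a sum of weight-two vectors, i.e.\ by an even-weight vector; after the lemma you sit at Hamming distance exactly one from the target parity vector (only $u$ is wrong), so no ``compensating modification'' using labels $1$ and $2$ can ever reach it. This is not a defect of the lemma but a global obstruction: $\sum_x d_2^H(x)$ is always even while the desired parities may sum to an odd number, so some vertex \emph{must} fail, and you cannot flip your way out of which one it is.

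The only escape is label~$3$, i.e.\ turning some vertices special or $3$-monochromatic, and doing so without creating conflicts between two special vertices (which can arise in both $V_1$ and $V_2$ with equal even $2$-degrees) or between two $3$-monochromatic vertices of $3$-degree~$1$ is precisely the hard content of this claim. That is why the paper does not apply Lemma~\ref{lemma:bipartite} to $H$ in one shot: it decomposes $H-u$ into components $C_i$ and each $C_i - v_i$ into components $J^i_j$, choosing the exceptional vertices among the neighbours of $v_i$, so that every wrong-parity vertex can be repaired by a $3$-labelled edge to a $v_i$ that has remained $1$-monochromatic; the residual interactions at $u$ then force the nice/bad/tricky classification and, in the final case, an application of the \CombNull{} to decide which edges $ua_i$ receive label~$3$. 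Your proposal contains none of this machinery, and as written it fails on exactly the configurations it was meant to resolve (for instance $u$ still $1$-monochromatic after the lemma with a $1$-monochromatic neighbour, or $u$ $2$-monochromatic with even $2$-degree equal to that of a $2$-monochromatic neighbour in $V_1$).
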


\begin{proof}
Let $v_1,\dots,v_p$ denote the neighbours of $u$ in $H$. Due to Lemma~\ref{lem:123} and because Claim~\ref{claim:123:noblue} does not apply on $H$,
for every vertex $v$ of $H$ we have $d_3(v) = 0$.
In particular, none of the $v_i$'s is $3$-monochromatic, implying that they are all $1$-monochromatic.  
Let $C_1, \dots, C_q$ be the $q \geq 1$ connected components of $H- u$. Every $C_i$ contains at least one of the $v_i$'s. Up to renaming the $v_i$'s, we can suppose w.l.o.g.~that $v_i \in V(C_i)$ if $i \leq q$. The vertices $v_i$ with $i > q$ (if any) can belong to any of the $C_i$'s.

Let us focus on one component $C_i$. Let $J^i_1, \dots, J^i_r$ denote the $r$ connected components of $C_i - v_i$. If $C_i$ has order~$1$, then by convention we set $r = 0$. In every $J^i_j$, choose a neighbour $x^i_j$ of $v_i$. 
By Lemma~\ref{lemma:bipartite}, we can relabel edges of $J^i_j$ with $1$ and~$2$ so that every vertex of $V_1 \cap V(J^i_j)$ has even $2$-degree, while every vertex of $V_2 \cap V(J^i_j)$, except possibly $x^i_j$, has odd $2$-degree.
Let $X_i$ be the set containing all $x^i_j$'s with even $2$-degree. Note that  $v_i$ has even $2$-degree, being precisely~$0$ since it is $1$-monochromatic; thus the only possible conflicts in $C_i$ involve vertices of $X_i$ as they are the only ones not following the parity rule on their $2$-degree (that is, they have even $2$-degree).

If $\abs{X_i} = 0$, $\abs{X_i} \geq 2$ or if $X_i = \set{w_i}$ and $d_2(w_i) \geq 1$ for some vertex $w_i$, then we say that $C_i$ is \emph{nice}. In this case, we can relabel edges of $C_i$ so that $C_i$ verifies Property~$(\mathcal P_3)$. If $\abs{X_i} = 0$, then $C_i$ already verifies Property~$(\mathcal P_3)$.
If $\abs{X_i} \geq 2$, then, for every $z \in X_i$, set $\ell(v_iz) = 3$. If $X_i = \set{w_i}$ and $d_2(w_i) \geq 1$, then set $\ell(v_iw_i) = 3$.
In the last two cases, all vertices of $X_i$ either become special while they have no special neighbours; or they become $3$-monochromatic with $3$-degree $1$ in which case $v_i$ is their only $3$-monochromatic neighbour and $d_3(v_i) \geq 2$. Moreover, in both cases, $d_3(v_i) \geq 1$ and all the neighbours of $v_i$ in $C_i$ which are not in $X_i$ have $3$-degree $0$. Thus, $v_i$ cannot be in conflict with its neighbours. Because the products of the other vertices of $C_i$ were not altered by these labelling modifications, $C_i$ verifies Property~$(\mathcal P_3)$.

If $X_i = \set{w_i}$ and $w_i$ is $1$-monochromatic with no such neighbours in $C_i - v_i$, then we say that $C_i$ is \emph{bad}. In such a bad component $C_i$, the only current conflict is between $v_i$ and $w_i$.
If $X_i = \set{w_i}$ and $w_i$ is $1$-monochromatic with at least one $1$-monochromatic neighbour $y_i$ in $C_i - v_i$, then we say that $C_i$ is \emph{tricky}. 
We denote by $N_n$ the number of nice components, by $N_b$  the number of bad components, and by $N_t$ the number of tricky components. 

In what follows, we consider several cases. In each case, we implicitly assume that none of the previous cases applies.

\begin{itemize}
    \item \textbf{Case 1.}  $N_t > 0$.
    
    Let $C_i$ be a tricky component. 
    For every bad or tricky component $C_j$ with $j \neq i$, set $\ell(v_jw_j) = 2$ and $\ell(uv_j) = 2$. In $C_j$, every vertex of $V_1$ now has even $2$-degree since $d_2(v_j) =2$ and every vertex of $V_2$ has odd $2$-degree since $d_2(w_j) =1$.
    
    Now, at this point:
    
    \begin{itemize}
        \item If $d_2(u)$ is even, then set $\ell(v_iw_i) = 2$ and $\ell(uv_i) = 2$. Here, $C_i$ behaves exactly like the other bad or tricky components and thus contains no conflicts.
        
        \item If $d_2(u)$ is odd, then set $\ell(v_iw_i) = \ell(w_iy_i) = 3$. Recall that all conflicts of $C_i$ involved $w_i$. Note that $w_i$ is now $3$-monochromatic with $3$-degree $2$ and no such neighbours. The vertices $y_i$ and $v_i$ are now $3$-monochromatic with $3$-degree $1$ and no such neighbours (in particular, they are not adjacent since they both belong to $V_1$). Hence $C_i$ does not contain any conflict.
    \end{itemize}
    
    In both cases, note that $u$ is $2$-monochromatic with odd $2$-degree. To summarise, we have reached the following situation.
    Special vertices (which were only created when dealing with nice components) only belong to $V_2$. $3$-monochromatic vertices are involved in no conflicts inside their component $C_j$ and have no $3$-monochromatic neighbours outside $C_j$ since $d_3(u) = 0$. All the other vertices of $H$ are either $1$-monochromatic or $2$-monochromatic: in particular, they have even $2$-degree if they belong to $V_1$, while they have odd $2$-degree if they belong to $V_2$. Hence, there is no conflict in $H$, and $H$ thus verifies Property~$(\mathcal P_3)$.
\end{itemize}

From now on, we can thus suppose that none of the $C_i$'s is tricky.

\begin{itemize}
    \item \textbf{Case 2.} $N_n = 0$.
    
    In this case, all $C_i$'s are bad. 
    We consider two cases:
    
    \begin{itemize}
        \item If $N_b=1$, \textit{i.e.}, $H$ contains only one (bad) component $C_1$, then set $\ell(v_1w_1) = 2$ and $\ell(uv_1) = 2$. Then every vertex of $H$ in $V_1$ is $1$-monochromatic or $2$-monochromatic with even $2$-degree, while every vertex in $V_2$ is $2$-monochromatic with odd $2$-degree. In particular, $d_2(w_1) = 1$, $d_2(v_1) = 2$ and $d_2(u) = 1$.
        
        \item If $N_b > 1$, then, for every (bad) component $C_i$, set $\ell(uv_i) = 3$. 
        Note that this makes all vertices of $H$ be monochromatic. 
        Every neighbour $z$ of $u$ verifies $d_3(z) \leq 1$ and, because $d_3(u) \geq 2$, the vertex $u$ cannot be in conflict with any of its neighbours in $H$. The vertices $v_i$ with $i \leq q$ are $3$-monochromatic with $3$-degree~$1$ and have no such neighbours. The $w_i$'s are $1$-monochromatic and have no $1$-monochromatic neighbours since the $C_i$'s were bad and their $v_i$'s (with $i \leq q$) are no longer $1$-monochromatic. 
        The other $1$-monochromatic vertices and $2$-monochromatic vertices raise no conflicts since, for every such vertex $z$ in $V_j \cap V(H)$ (where $j \in \set{1,2}$), we have  $d_2(z) \equiv j-1 \bmod 2$.  
    \end{itemize}
    
    Hence $H$ verifies Property~$(\mathcal P_3)$ in both cases. Thus, we can now assume $N_n>0$.

    \item \textbf{Case 3.} $N_{b} >0$. 
        
    Suppose now that at least one of the $C_i$'s is bad.
    Since  $N_n \geq 1$, not all $C_i$'s are bad. So, since $N_t=0$, we can thus suppose that $C_1$ is nice.
    For every bad component $C_j$, set $\ell(v_jw_j) = 2$ and $\ell(uv_j) = 2$. In $C_j$, every vertex of $V_1$ has even $2$-degree (since $d_2(v_j) =2$) while every vertex of $V_2$ has odd $2$-degree (since $d_2(w_j) =1$). 
    
    Let us now analyse the $2$-degree of $u$, which is at least~$1$ since $N_b>0$.
    
    \begin{itemize}
        \item If $d_2(u)$ is odd, then we claim that we have no conflicts in $H$. First, we saw earlier that any two vertices in a nice $C_i$ cannot be in conflict. Next, in every bad $C_j$, every vertex of $V_1$ has even $2$-degree, while every vertex of $V_2$ has odd $2$-degree; hence, any two vertices of $C_j$ cannot be in conflict. Thus, every possible conflict in $H$ must involve~$u$. Note that $u$ is $2$-monochromatic with odd $2$-degree while no vertex of $V_1 \cap V(H)$ is $2$-monochromatic with odd $2$-degree. Thus $u$ cannot be in conflict with a vertex of $H$.
    
        \item If $d_2(u)$ is even (and thus at least~$2$ since $N_b > 0$), then set $\ell(uv_1) = 3$. 
        Again, for the same reasons as earlier, any two  vertices in a $C_i$ with $i > 1$ cannot be in conflict. Since only $v_1$ had its product changed in $C_1$, then, if there is a conflict between two vertices of $C_1$, then it must involve $v_1$. Note that $v_1$ is $3$-monochromatic. If $d_3(v_1) \geq 2$, then it is the only vertex of $C_1$ with this property. If $d_3(v_1) = 1$, then $v_1$ was $1$-monochromatic before $uv_1$ was assigned labelled~$3$, in which case $v_1$, now, still has no $3$-monochromatic neighbours in $C_1$ by construction. Thus, in both cases, $v_1$ cannot be in conflict with any other vertex of $C_1$. Thus, any conflict in $H$ must involve $u$. Note that $u$ is special and that every other special vertex of $H$ must belong to some nice component 
        $C_i$, and must be a neighbour of $v_i$. In other words, all special vertices of $H$ must belong to $V_2$, and thus $u$ cannot be involved in a conflict.
    \end{itemize}

    Thus, in both cases, $H$ verifies Property~$(\mathcal P_3)$, and, from now on, we can assume $N_b=0$. That is, all $C_i$'s are nice, since also $N_t=0$.

    \item \textbf{Case 4.} $N_{n} = 1$.
    
    Since $N_b = N_t = 0$, we have that $H-u$ is connected, \textit{i.e.}, $q=1$ and $C_1$ is the only (nice) component. As we assumed that $d(u) \geq 2$, vertex $u$ has at least one other neighbour $v_2$ (in $V_1$) in $C_1$. Since $C_1$ is nice, recall that any two adjacent vertices of $C_1$ cannot be in conflict, due to how $\ell$ was modified so far.
    
    Let us analyse the possible situations, with respect to $v_1$.
    
    \begin{itemize}
        \item If $v_1$ is $1$-monochromatic, then set $\ell(uv_1) = \ell(uv_2) = 3$. In this case, $u$ has $3$-degree~$2$ while no other vertex of $H$ has $3$-degree at least~$2$. In $C_1$, the vertices of $V_1$ are either $2$-monochromatic with even $2$-degree, $1$-monochromatic, special (only $v_2$ can verify this, and this is only if $d_2(v_2) > 0$ since $d_2(v_2)$ is even), or $3$-monochromatic with $3$-degree~$1$ (only $v_1$ and $v_2$ can verify this, and, for the latter vertex, this is only if $d_2(v_2)=0$). Also, in $C_1$, the vertices of $V_2$ are $2$-monochromatic with odd $2$-degree. Hence, there are no conflicts.
        
        \item If $v_1$ is $3$-monochromatic, then set $\ell(uv_1) = 3$. In this case, in $H$, the vertices of $V_1$ are either $2$-monochromatic with even $2$-degree, $1$-monochromatic, or $3$-monochromatic with $3$-degree at least~$2$ (only $v_1$ verifies this). The vertices of $V_2$ are either $2$-monochromatic with odd $2$-degree, special, or $3$-monochromatic with $3$-degree~$1$ (in particular, $u$ verifies this). Hence, again there are no conflicts.
    \end{itemize}
        
    Thus, in both cases, $H$ eventually verifies Property~$(\mathcal P_3)$. From now on, in the next cases, we can thus assume that $N_n>1$.

    \item \textbf{Case 5.} $N_{n} \geq 2$ and there is some nice $C_i$ with $d_3(v_i) \geq 2$ that contains another neighbour $x$ of $u$ (\textit{i.e.}, $u$ has at least two neighbours in $C_i$).
    
    Assume $C_i$ does verify these properties.
    Let us start by modifying $\ell$, by changing to~$2$ the label assigned to every edge incident to $v_i$ assigned label~$3$. Note then that, in $C_i$, due to why we originally assigned label~$3$ to edges incident to $v_i$ in the first place, now every vertex of $V_2$ is $2$-monochromatic with odd $2$-degree while every vertex of $V_1 \setminus \set{v_i}$ is $2$-monochromatic with even $2$-degree. Also, due to our assumption on $v_i$, we have $d_2(v_i) \geq 2$.
    
    Let us now focus on $v_i$.
    
    \begin{itemize}
        \item If $d_2(v_i)$ is odd, then set $\ell(uv_i) = 2$. This makes $v_i$ become $2$-monochromatic with even $2$-degree with no such neighbours, while $u$ becomes $2$-monochromatic with odd $2$-degree with no such neighbours. 
    
        \item Assume now $d_2(v_i)$ is even.
        Let $C$ be a shortest cycle containing $u$, $v_i$ and $x$ (note that $C$ must exist since $C_i$ is connected). Now relabel every edge of $C$ so that $1$'s becomes $2$'s and \textit{vice versa}.  Note that, as a result, we get $d_2(u) = 2$, and, in $C_i$, every vertex of $V_2$ is $2$-monochromatic with odd $2$-degree while every vertex of $V_1$ is $1$-monochromatic or $2$-monochromatic with even $2$-degree. Hence, there is no conflict in $C_i$. Also, since every $C_j$ with $j \neq i$ is nice and we did not modify labels incident to vertices of $C_j$, there are still no conflicts in $C_j$.
        
        If no conflicts remain, then $H$ now verifies Property~$(\mathcal P_3)$. So assume some conflicts remain. All these conflicts must involve $u$, but, now, we have that $d_2(u)=2$.
        Since $N_n \geq 2$, there exists $v_j \notin C_i$ and $j \leq q$. Set $\ell(uv_j) = 3$, so that $u$ becomes special. Note that this increases $d_3(v_j)$. If $v_j$ had $3$-degree~$0$, then $v_j$ was $1$-monochromatic and $C_j$ had no $3$-monochromatic vertices, and, hence, now, there is no conflict in $C_j$.  If $v_j$ had non-zero $3$-degree, then every neighbour of $v_j$ in $C_j$ still has $3$-degree at most~$1$ while $v_j$ has $3$-degree at least $2$. Hence, there is no conflict in $C_j$. 
        
        From here, it can be checked that no conflicts remain at all in $H$. In particular, all special vertices, including $u$, lie in $V_2$, and they are thus not in conflict. Thus, $u$ is not in conflict. Also, there is still no conflict in a $C_k$ with $k \notin{i,j}$ since $C_k$ is nice and the products of their vertices did not change. Also, there is no conflict in $C_i$ and $C_j$ by our previous remarks. 
    \end{itemize}

    Thus, in both cases, $H$ verifies Property~$(\mathcal P_3)$. We now deal with a final case.
    
    \item \textbf{Case 6.} $N_{n} \geq 2$.
    
    Let $A = \set{a_1,\dots,a_r}$ be the subset of neighbours of $u$ having $2$-degree~$0$. Note that $r \geq N_n \geq 2$. Note also that some of these $a_i$'s are $v_i$'s with $i\leq q$ (all of which are in nice components, since $N_t=N_b=0$), in which case, by how the nice components were treated earlier, they can be $3$-monochromatic. Furthermore, $A$ may contain more than $N_{n}$ vertices since it may also contain $1$-monochromatic $v_i$'s with $2$-degree~$0$ and $i > q$. However, since the previous case does not apply, if some $v_i$ verifies $d_3(v_i) \geq 2$ (thus $i \leq q$), then $u$ cannot neighbour any other vertex of $C_i$.
    
    For every $a_i \in A$, we define $n_i$ as the current value of $d_3(a_i)$, at the beginning of this case (\textit{i.e.}, before modifying labels below). Recall that we can have $d_3(a_i)>0$, in which case $a_i$ is a $v_j$ in a (nice) $C_j$ for which we had to remove some conflicts. Also, by the choice of $A$, at this point, $\ell(ua_i) = 1$. The goal now, is to relabel some $ua_i$'s with $3$ in such a way that $u$ is not in conflict with the vertices of $A$. To show this can be achieved, we use the \CombNull{}~\cite{Alo99}.
    
    \begin{theorem}[\CombNull{}~\cite{Alo99}]
	\label{thm:null} \label{th:combi-null}
	Let $\mathbb{F}$ be an arbitrary field, and $P=P(Z_1,\dots,Z_p)$ be a polynomial in $\mathbb{F}[Z_1,\dots,Z_p]$.
	Suppose that the coefficient of a monomial $Z_1^{k_1}\dots Z_p^{k_p}$, where every $k_i$ is a non-negative integer, is non-zero in $P$ and the degree of $P$ equals $\sum_{i=1}^p k_i$.
	If $S_1,\dots,S_p$ are subsets of $\mathbb{F}$ with $|S_i|>k_i$ for every $i \in \{1,\dots,p\}$,
	then there are $z_1\in S_1,\dots,z_p\in S_p$ so that $P(z_1,\dots,z_p) \neq 0$.
    \end{theorem}
    
    For every $i \in \{1,\dots,r\}$, let $Z_i$ be a variable belonging to $S_i = \set{0,1}$ and representing whether $ua_i$ is assigned label~$3$ ($Z_i = 1$) or not ($Z_i = 0$).
    Let $P$ be the following polynomial:
    $$ P(Z_1,\dots,Z_r) = \prod_{i=1}^r \paren{\sum_{\substack{j=1\\j\neq i}}^r Z_j - n_i}.$$
    Since $r \geq N_{n} \geq 2$, note that $P$ has degree $r$ at least~$2$. Furthermore, the monomial $\prod_{i=1}^r Z_i$ has non-zero coefficient (since every $Z_i$ has positive coefficient in the description of~$P$). Hence the \CombNull{} applies and there is a way to choose values $z_1,\dots,z_r$ in $\{0,1\}$ for $Z_1,\dots,Z_r$ so that $P(z_1,\dots,z_r) \neq 0$.
    
    Now, for every $i \in \{1,\dots,r\}$ for which $z_i = 1$, set $\ell(ua_i) = 3$. Note that $d_3(u) =  \sum_{j=1}^r z_j$ and $d_2(u) = 0$.
    We claim that $H$ now verifies Property~$(\mathcal P_3)$. Assume this is wrong, and suppose that there is a conflict in $H$ between two vertices $x \in V_1$ and $y \in V_2$. For now, suppose that $u$ is not one of these two vertices.
    
    \begin{itemize}
        \item If $x$ and $y$ are $2$-monochromatic, then, because we did not modify $2$-degrees when we modified $\ell$ above, and all $C_i$'s are nice, then $d_2(x)$ is even while $d_2(y)$ is odd, a contradiction to the fact that $x$ and $y$ are in conflict. 
        
        \item Similarly, the modifications above did not introduce new $1$-monochromatic vertices. Thus, $x$ and $y$ cannot be both $1$-monochromatic, since all $1$-monochromatic vertices of $H$ (different from $u$) belong to $V_1$.
        
        \item Similarly, $x$ and $y$ cannot be special. This is because, since the $a_i$'s have $2$-degree $0$, the modifications did not introduce new special vertices. So, all special vertices are adjacent to $v_i$'s (with $i \leq q$), and thus lie in $V_2$.
        
        \item If $x$ and $y$ are $3$-monochromatic, then $y$ must be a neighbour of some $v_i$ (with $i \leq q$) and $y$ thus verifies $d_3(y) = 1$. In this case, $v_i$ verified $d_3(v_i) \geq 2$ at the beginning of this case (by how $\ell$ was constructed in $C_i$, and, in particular, because $y$ is not special), and thus $x \neq v_i$. Thus, $x$ became $3$-monochromatic because $ux$ was relabelled with label~$3$ through the \CombNull{}. So we deduce that $u$ has two neighbours in $C_i$, where we had $d_3(v_i) \geq 2$ at the beginning of this case. This is not possible, as this configuration is forbidden due to previous \textbf{Case~5} not applying.
    \end{itemize}

    Hence, every possible conflict must involve $u$.
    Vertex $u$ has two types of neighbours: those with non-zero $2$-degree, and the vertices of $A$. Since $d_2(u) = 0$, the first group of neighbours cannot be in conflict with $u$.  Suppose now that $a_i \in A$ is in conflict with $u$. Note that $d_3(a_i) = n_i + z_i$ and $d_3(u) =  \sum_{j=1}^r z_j$. Since $d_3(a_i) = d_3(u)$, we have $\sum_{\substack{j=1\\j\neq i}}^r z_j - n_i =0$ and thus $P(z_1,\dots,z_r) = 0$, a contradiction.
    
    Hence there is no conflict in $H$, and $H$ verifies Property~$(\mathcal P_3)$.\qedhere
\end{itemize}
\end{proof}

We are now ready to get rid of the last possible conflicts in $\mathcal H$.

\begin{claim}\label{claim:123:lastcase}
For every remaining $H$,
we can relabel edges so that $H$ verifies Property~$(\mathcal P_3)$.
\end{claim}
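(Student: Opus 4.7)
After Claims~\ref{claim:123:noblue} and~\ref{claim:123:2neigh}, the remaining $H$ is rigidly structured: every vertex of $V_1 \cap V(H)$ is $1$-monochromatic; every vertex of $V_2 \cap V(H)$ is either $2$-monochromatic, or a $1$-monochromatic pendant of degree~$1$ in $H$; and the pendants $T_1, \ldots, T_k$ (with $k \geq 1$) have pairwise distinct attachments $a_1, \ldots, a_k \in V_1 \cap V(H)$. Moreover, since $|E(H)| \geq 2$ by Item~$6$ of Lemma~\ref{lem:123}, each attachment $a_i$ must have at least one non-pendant $V_2$-neighbor in $V(H)$, for otherwise $H$ would collapse to the single edge $a_iT_i$.

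The plan is to relabel edges around each attachment $a_i$ independently, following one of two cases. If $a_i$ has at least two non-pendant $V_2$-neighbors $w_{i,1}, w_{i,2}$ in $V(H)$, I set $\ell(a_iT_i) = 3$ and $\ell(a_iw_{i,1}) = \ell(a_iw_{i,2}) = 2$, leaving all other edges incident to $a_i$ labelled~$1$; then $a_i$ becomes special (with $d_2(a_i) = 2$, $d_3(a_i) = 1$, product~$12$) and $T_i$ becomes $3$-monochromatic with product~$3$. Otherwise, $a_i$ has exactly one non-pendant $V_2$-neighbor $w_i$, in which case I set $\ell(a_iw_i) = 2$ while keeping $\ell(a_iT_i) = 1$; then $a_i$ becomes $2$-monochromatic with product~$2$ and $T_i$ stays $1$-monochromatic with product~$1$. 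Every edge of $H$ not incident to any attachment keeps label~$1$.

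I would then verify that the resulting $H$ satisfies Property~$(\mathcal{P}_3)$. Every vertex of $V(H)$ is monochromatic or special: attachments by construction, pendants and non-attachment $V_1$-vertices by untouched labels, and each non-pendant $V_2$-vertex $w$ remains $2$-monochromatic, with updated $d_2$ equal to its original downward $d_2 \geq 1$ plus the number of attachments that labelled their $w$-edge with~$2$. All former conflicts $(a_i, T_i)$ are resolved. Every edge between a non-attachment $V_1$-vertex (product~$1$) and a non-pendant $V_2$-vertex (product~$\geq 2$) is non-conflicting, and every edge from an $a_i$ to a non-pendant $V_2$-neighbor $w$ is too: in Case~A, the product~$12$ of $a_i$ has a factor~$3$ that no power of~$2$ can match, and in Case~B picking $w_i$ forces $d_2(w_i) \geq 2$, so $w_i$'s product is $\geq 4 \neq 2$. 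Finally, no cross-conflict with $V_3 \cup \dots \cup V_t$ arises, exactly as in Lemma~\ref{lem:123:P3}: the only special product introduced is~$12 = 2^2 \cdot 3$, which, by the $(d_2, d_3)$-constraints of Lemma~\ref{lem:123}, no bichromatic $V_i$-vertex (for $i \geq 3$) can achieve.

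The main subtlety is ensuring that attachments sharing a common non-pendant neighbor $w$ do not jointly create new conflicts, but this is automatic: each time an attachment labels its edge to $w$ with~$2$, $d_2(w)$ is increased by~$1$, so once $w$ is picked even a single time, $d_2(w) \geq 2$ and conflict with any attachment (whose product is~$2$ or~$12$) becomes impossible.
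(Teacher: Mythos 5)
Your proposal is correct, and it takes a genuinely different route from the paper. The structural analysis at the start matches the paper's (and is right): all $V_1$-vertices of $H$ are $1$-monochromatic, the $1$-monochromatic $V_2$-vertices are pendants, distinct pendants have distinct attachments (else Claim~\ref{claim:123:noblue} applies), and each attachment has a $2$-monochromatic $V_2$-neighbour since $H$ has at least two edges. From there the paper proceeds globally: it deletes the pendant $u$ of one conflicting pair, invokes Lemma~\ref{lemma:bipartite} to impose a parity pattern on the $2$-degrees throughout $H-u$ (odd on $V_1\setminus\{v\}$, even on $V_2$), and then patches the single exceptional vertex $v$ with one label~$3$, making $v$ or $x_1$ special as needed. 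You instead resolve every pendant--attachment conflict by a purely local surgery of at most three edges around each attachment, relying on the rigidity of the products (every $V_1$-vertex has product~$1$, every non-pendant $V_2$-vertex a power of~$2$ at least~$2$) to check that no new conflict can arise; your closing observation that any non-pendant $w$ adjacent to a product-$2$ attachment must have been ``picked'' and hence has $d_2(w)\ge 2$ is exactly the point that needs checking, and it holds. Your argument buys a shorter, more elementary proof of this final claim that dispenses with Lemma~\ref{lemma:bipartite}; the paper's version is more uniform with the preceding claims, where the parity lemma is indispensable. Both correctly reduce to Lemma~\ref{lem:123:P3} at the end, since all vertices of $H$ end up monochromatic or special.
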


\begin{proof}
Let $v \in V_1$ and $u \in V_2$ be two adjacent $1$-monochromatic vertices of $H$ (which must exist as otherwise $H$ would verify Property~$(\mathcal P_3)$). Because $H$ has at least two edges (as otherwise it would belong to $M$, not to $\mathcal{H}$), at least one of $v$ and $u$ must have another neighbour in $H$. 
Since Claim~\ref{claim:123:noblue} does not apply, the neighbours of $u$ are $1$-monochromatic and since Claim~\ref{claim:123:2neigh} does not apply, $u$ must have degree~$1$ in $H$. So $v$ is also adjacent to $k \geq 1$ vertices $x_1,\dots,x_k \in V_2$ different from $u$. Still by Claim~\ref{claim:123:2neigh}, note that if an $x_i$ is $1$-monochromatic, then it must be of degree~$1$ in $H$, since $v$ is a neighbour of $x_i$; but then Claim~\ref{claim:123:noblue} would apply, as $v$ is $1$-monochromatic and neighbours $u$ and $x_1$, which are $1$-monochromatic and of degree~$1$ in $H$. Thus, we can assume all $x_i$'s are $2$-monochromatic (because of incident downward edges to $V_3, \dots, V_t$; recall that all edges of $H$ are labelled~$1$).


Set $H'=H-u$.
According to Lemma~\ref{lemma:bipartite}, we can relabel edges in $H'$ with~$1$ and~$2$ so that all vertices in $(V_1 \cap V(H')) \setminus \{v\}$ have odd $2$-degree, while all vertices in $V_2 \cap V(H')$ have even $2$-degree. Recall that $u$ is $1$-monochromatic. Thus, if also $v$ is $2$-monochromatic with odd $2$-degree, then we are done. Assume thus that $v$ is $2$-monochromatic with even $2$-degree.

\begin{itemize}
    \item Assume first that the $2$-degree of $v$ is even at least~$2$. In that case, set $\ell(vu)=3$. This way, $u$ becomes $3$-monochromatic, while $v$ becomes special.
    
    \item Assume now $v$ is $1$-monochromatic. This implies that $\ell(vx_1)=1$. Change $\ell(vx_1)$ to $3$. This way, $x_1$ becomes special (recall its $2$-degree is even and at least~$1$, due to incident downward edges), while $v$ becomes $3$-monochromatic. Note that $u$ remains $1$-monochromatic.
\end{itemize}

In both cases, it can be checked that $H$ now fulfils Property~$(\mathcal P_3)$.
\end{proof}

At this point, we dealt with all connected components of $\mathcal{H}$, and the resulting labelling $\ell$ of $G$ is p-proper by Lemma~\ref{lem:123:P3}.
The whole proof is thus complete.

\section{Conclusion}\label{section:ccl}

Although we provide a solution to the product version of the 1-2-3 Conjecture, our investigations and our proof methodology actually open the way to several appealing directions for further research on the topic. In particular:

\begin{itemize}
    \item Distinguishing labellings, generally speaking, is a field with many interconnections between more or less distant problems, and, as a result, any major breakthrough on one particular distinguishing labelling problem can have drastic consequences on related others. A perfect illustration for that claim, is through the example of a brilliant algorithm designed by Kalkowski in~\cite{Kal09} to get very close to a full verification of the total version of the 1-2-3 Conjecture (where vertices are also labelled, the label assigned to every vertex taking part to its sum) from~\cite{PW10}. Since its introduction, Kalkowski's Algorithm has indeed been revisited in numerous works, which, sometimes, allowed to improve significantly the best results that were known for long. In particular, the upper bound, from~\cite{KKP10}, of $5$ on $\chis(G)$ for every nice graph $G$ results from straight modifications of Kalkowski's Algorithm. Very interesting results for generalisations of the 1-2-3 Conjecture to hypergraphs were also established through modifications of Kalkowski's approach~\cite{KKP17}. In~\cite{KKP11}, new bounds on the irregularity strength of graphs (which is, roughly put, a generalisation of the 1-2-3 Conjecture where all vertices, not only the adjacent ones, are required to be distinguished through their sums by a labelling) were established, and the proof arguments were, again, strongly influenced by Kalkowski's Algorithm. Distinguishing labellings really form a field where making significant progress relies on the introduction of novel ideas, which might lead to many appealing perspectives for the whole field.
    
    As seen through this work, the product version of the 1-2-3 Conjecture, and in particular p-proper labellings, rely on very peculiar properties. Yet, proving it required quite some efforts, the resulting proof being rather technical at times. As mentioned in the introductory section, we were highly influenced by Vu\v{c}kovi\'c result from~\cite{Vuc18}, which, we believe, is another one of these major results that can lead to many interesting accomplishments, as our main result in this work just showcases.
    
    According to these thoughts, one can naturally wonder whether our proof scheme could in turn be modified to deal with problems that are close to the product version of the 1-2-3 Conjecture. A few candidates come immediately to mind. In particular, one could wonder whether we can get new results on the product irregularity strength of graphs~\cite{Anh09} (in which all vertices must be distinguished through their products by a labelling). One could also wonder about consequences for the list version of the product version of the 1-2-3 Conjecture (introduced in~\cite{BHLS22}, in which labellings must be constructed by assigning labels from dedicated lists of three labels). We are not sure exactly what one could expect, but these questions would definitely be worth considering.
    
    \item Note that an m-proper $3$-labelling is similar, when $a,b,c$ are pairwise coprime labels, to a p-proper $\{a,b,c\}$-labelling. Thus, for any three pairwise coprime labels $a,b,c$, the result of Vu\v{c}kovi\'c from~\cite{Vuc18} implies that every nice graph admits a p-proper $\{a,b,c\}$-labelling. An intermediate question lying in between the product version of the 1-2-3 Conjecture and its list variant would thus be about the existence of p-proper $\{a,b,c\}$-labellings for any nice graph and any three fixed labels $a,b,c$.
    
    \item Other directions of interest would deal with the connections between the sum version and the product version of the 1-2-3 Conjecture. Note indeed that there are definitely connections, as, by labellings, label~$0$ in the sum version plays the same role as label~$1$ in the product version. For this reason, s-proper $\{0,a\}$-labellings and p-proper $\{1,b\}$-labellings are similar objects for any $a,b \neq 0$. When considering three labels, note that the situation is not as obvious, as the equivalence between an s-proper $\{0,a_1,a_2\}$-labelling and a p-proper $\{1,b_1,b_2\}$-labelling is not guaranteed (as two sums of $a_1$'s and $a_2$'s might be different while the corresponding two products of $b_1$'s and $b_2$'s might not be, and \textit{vice versa}). However, there are situations where this is guaranteed, for instance when $a_i = \log(b_i)$ for every $i \in \{1,2\}$, or when the $a_i$'s (and $b_i$'s) are such that we can infer the coefficients of a sum (and product) from said sum (and product). 
    
    These thoughts relate to an interesting question related to the 1-2-3 Conjecture. By the arguments above, it can be checked that from an m-proper $3$-labelling of some graph $G$, we can obtain an s-proper $\{1,\Delta(G),\Delta(G)^2\}$-labelling of $G$. So, for $G$, there indeed exist three labels $a_G,b_G,c_G$ for which we know s-proper $\{a_G,b_G,c_G\}$-labellings exist. Note however that these $a_G,b_G,c_G$ are functions of $G$, and thus, for a graph $H$ different from $G$, we might have $\{a_G,b_G,c_G\} \neq \{a_H,b_H,c_H\}$. The question is whether we can provide three labels $a^*,b^*,c^*$ that would work for all nice graphs. The 1-2-3 Conjecture asserts that $1,2,3$ would be an example of three such labels $a^*,b^*,c^*$.
    As mentioned earlier, the result of Vu\v{c}kovi\'c implies that there is an s-proper $\{1,b_G,c_G\}$-labelling of any nice graph $G$ where $b_G$ and $c_G$ are functions of $G$. By earlier arguments, our proof of the product version of the 1-2-3 Conjecture implies that that there is an s-proper $\{0,1,c_G\}$-labelling of every nice graph $G$, where $c_G$ is a function of $G$. Thus, in some sense, we are now just one step away from providing three labels $a^*,b^*,c^*$ as described above.
\end{itemize}

\section*{Acknowledgement}

The authors are grateful to the three anonymous referees for their careful reading of a previous version of the current work, which 
allowed to improve the general quality and correctness not only of  the main proof, but also of the whole paper.

\end{document}